\RequirePackage[l2tabu,orthodox]{nag}
\documentclass[a4paper,english]{scrartcl}
\pdfoutput=1 

\usepackage{todonotes,hyperref}

%
%
\usepackage{array}
\usepackage{multirow}

%
%
\usepackage[utf8]{inputenc}
\usepackage[T1]{fontenc}
\usepackage{lmodern}    

\usepackage[%
  activate={true,nocompatibility},  
  final,                
  tracking=true,        %
  kerning=true,         %
  spacing=true,         %
  ]{microtype}
\microtypecontext{spacing=nonfrench}

\usepackage[within=none,figurename=Fig.,labelsep=period]{caption} 

%
%
\usepackage{amssymb}
\usepackage{amsmath}
\usepackage{bm}         
\usepackage{mathtools}  

\allowdisplaybreaks[1] 

%
%
\usepackage{tikz}
\usetikzlibrary{shapes}

%
%
\usepackage[inline]{enumitem}
\setlist[itemize]{label=$\circ$}
\setlist[description]{labelindent=\parindent}

\setdescription{leftmargin=30pt}

%
%

\bibliographystyle{plainurl}

%
%
\usepackage[amsmath,hyperref,thmmarks]{ntheorem}

\theoremseparator{.}
\newtheorem{theorem}{Theorem}
\newtheorem{lemma}[theorem]{Lemma}

\newtheorem{observation}[theorem]{Observation}
\newtheorem{corollary}[theorem]{Corollary}


\theoremstyle{plain}
\theorembodyfont{\upshape}

\theoremstyle{nonumberplain}
\theoremheaderfont{\itshape}
\theorembodyfont{\upshape}
\theoremsymbol{\ensuremath{_\blacksquare}}
\newtheorem{proof}{Proof}

%
%


\DeclarePairedDelimiter\paren{\lparen}{\rparen}
\DeclarePairedDelimiter\abs{\lvert}{\rvert}
\DeclarePairedDelimiter\set{\{}{\}}
\DeclarePairedDelimiterX\setc[2]{\{}{\}}{\,#1 \;\colon\; #2\,}
\DeclarePairedDelimiterX\parenc[2]{\lparen}{\rparen}{\,#1 \;\delimsize\vert\; #2\,}

\newcommand{\cc}[1]{\ensuremath{\mathrm{#1}}}
\newcommand{\pp}[1]{\textup{#1}}
\newcommand{\op}[1]{\ensuremath{\operatorname{#1}}}

\renewcommand{\P}{\cc{P}}

\newcommand{\poly}{\op{poly}}


\newcommand{\N}{\mathbf{N}}

\newcommand{\Q}{\mathbf{Q}}
\newcommand{\R}{\mathbf{R}}

\newcommand{\GF}[1]{\ensuremath{{\textrm{GF}}(#1)}}

\newcommand{\dotcup}{\mathbin{\dot\cup}}

\newcommand{\zo}{\set{0,1}}


\usepackage{mathrsfs}


\usepackage{xparse}
\DeclareDocumentCommand{\restrict}{O{}}{\mathord{\restriction}_{#1}}

\newcommand{\cETH}{\ensuremath{{\#\mathrm{ETH}}}}
\newcommand{\forests}{\ensuremath{\mathcal F}}


\newcommand{\psat}{\#\mathrm{Pos2Sat}}
\newcommand{\isat}{\#\mathrm{Imp2Sat}}




\title{Fine-grained dichotomies for the Tutte plane and Boolean \#CSP
 \footnote{This work was done while the authors were visiting the Simons Institute for the Theory of Computing.}}

\author{Cornelius Brand, Holger Dell, Marc Roth \\
Saarland University and Cluster of Excellence, MMCI}



\date{\today}

\begin{document}
 
\maketitle

\begin{abstract}
  Jaeger, Vertigan, and Welsh~\cite{JVW90} proved a dichotomy for the complexity 
  of evaluating the Tutte polynomial at fixed points: The evaluation is \#P-hard 
  almost everywhere, and the remaining points admit polynomial-time algorithms.  
  Dell, Husfeldt, and Wahlén~\cite{DHW10} and Husfeldt and 
  Taslaman~\cite{HusfeldtTaslaman}, in combination with 
  Curticapean~\cite{curticapean2015block}, extended the \#P-hardness results to 
  tight lower bounds under the counting exponential time hypothesis \#ETH, with 
  the exception of the line $y=1$, which was left open. We complete the 
  dichotomy theorem for the Tutte polynomial under \#ETH by proving that the 
  number of all acyclic subgraphs of a given $n$-vertex graph cannot be 
  determined in time $\exp\paren[\big]{o(n)}$ unless \#ETH fails.

  Another dichotomy theorem we strengthen is the one of Creignou and Hermann 
  \cite{creignou1996complexity} for counting the number of satisfying 
  assignments to a constraint satisfaction problem instance over the Boolean 
  domain. We prove that all \#P-hard cases are also hard under \#ETH. The main 
  ingredient is to prove that the number of independent sets in bipartite graphs 
  with $n$ vertices cannot be computed in time $\exp\paren[\big]{o(n)}$ unless 
  \#ETH fails.

  In order to prove our results, we use the block interpolation idea by 
  Curticapean~\cite{curticapean2015block} and transfer it to systems of linear 
  equations that might not directly correspond to interpolation.
\end{abstract}

\section{Introduction}

Counting combinatorial objects is at least as hard as detecting their existence, 
and often it is harder.
Valiant~\cite{V79} introduced the complexity class \cc{\#P} to study the 
complexity of counting problems and proved that counting the number of perfect 
matchings in a given graph is \cc{\#P}-complete.
By a theorem of Toda~\cite{toda89}, we know that 
$\cc{PH}\subseteq\cc{P}^\cc{\#P}$ holds; in particular, for every problem in the 
entire polynomial-time hierarchy, there is a polynomial-time algorithm that is 
given access to an oracle for counting perfect matchings.
This theorem suggests that counting is much harder than decision.

When faced with a problem that is \cc{NP}-hard or \cc{\#P}-hard, the area of 
exact algorithms strives to find the fastest exponential-time algorithm for a 
problem, or find reasons why faster algorithms might not exist.
For example, the fastest known algorithm for computing perfect matchings in 
$n$-vertex graphs~\cite{bjorklund2012counting} runs in time $2^{n/2} \cdot 
\poly(n)$.
It has been hypothesized that no $O\paren*{1.99^{n/2}}$-time algorithm for the 
  problem exists, but we do not know whether such an algorithm has implications 
  for the strong exponential time hypothesis.
However, we know by~\cite{DHMTW14} that the term $O(n)$ in the exponent is 
asymptotically tight, in the sense that a $2^{o(n)}$-time algorithm for 
counting perfect matchings would violate the (randomized) exponential time 
hypothesis (ETH) by Impagliazzo and Paturi~\cite{IP01}.
Using the idea of block interpolation, Curticapean~\cite{curticapean2015block} 
strengthened the hardness by showing that a $2^{o(n)}$-time algorithm for 
counting perfect matchings would violate the (deterministic) counting 
exponential time hypothesis (\#ETH).

Our main results are hardness results under \#ETH for 1) the problem of counting 
all forests in a graph, that is, its acyclic subgraphs, and 2) the problem of 
counting the number of independent sets in a bipartite graph.
If \#ETH holds, then neither of these problems has an algorithm running in time 
$\exp(o(n))$ even in simple $n$-vertex graphs of bounded maximum degree.
We use these results to lift two known ``FP vs.~\#P-hard'' dichotomy theorems to 
their more refined and asymptotically tight ``FP vs.~\#ETH-hard'' variants.

\subsection{The Tutte polynomial under \#ETH}
The Tutte polynomial of a graph~$G$ with $G=(V,E)$ is the bivariate polynomial 
$T(G;x,y)$ defined via
\begin{align}
  T(G;x,y)
  =
  \sum_{A\subseteq E}
  \paren{x-1}^{k(A)-k(E)}
  \paren{y-1}^{k(A)+|A|-|V|}
  \,,
\end{align}
where $k(A)$ is the number of connected components of the graph $(V,A)$.
The Tutte polynomial captures many combinatorial properties of a graph in a 
common framework, such as the number of spanning trees, forests, proper 
colorings, and certain flows and orientations, but also less obvious connections 
to other fields, such as link polynomials from knot theory, reliability 
polynomials from network theory, and (perhaps most importantly) the Ising and 
Potts models from statistical physics.
We make no attempt to survey the literature or the different applications for 
the Tutte polynomial, and instead refer to the upcoming CRC handbook on the 
Tutte polynomial \cite{CRChandbookTutte}.

Since $T(G;-2,0)$ corresponds to the number of proper $3$-colorings of~$G$, we 
cannot hope to compute all coefficients of $T(G;x,y)$ in polynomial time.
Instead, the literature and this paper focus on the complexity of evaluating the 
Tutte polynomial at fixed evaluation points.
That is, for each $(x,y)\in\Q^2$, we consider the function $T_{x,y}$ defined as 
$G\mapsto T(G;x,y)$.
Jaeger, Vertigan, and Welsh~\cite{JVW90} proved that this function is either 
\cc{\#P}-hard to compute or has a polynomial-time algorithm.
In particular, if $(x,y)$ satisfies $(x-1)(y-1)=1$, then $T_{x,y}$ corresponds 
to the $1$-state Potts model and has a polynomial-time algorithm, and if $(x,y)$ 
is one of the four points $(1,1)$, $(-1,-1)$, $(0,-1)$, or $(-1,0)$, it also has 
a polynomial-time algorithm; the most interesting point here is $T(G;1,1)$, 
which corresponds to the number of spanning trees in~$G$ and happens to admit a 
polynomial-time algorithm.

A trivial algorithm to compute the Tutte polynomial runs in time $2^{O(m)}$, 
where $m$ is the number of edges.
Björklund et al.~\cite{BHKK08} proved that there is an algorithm running in time 
$\exp\paren[\big]{O(n)}$, where $n$ is the number of vertices.
Dell et al.~\cite{DHMTW14} proved for all hard points, except for points with 
$y=1$, that an $\exp\paren[\big]{o(n/\log^3 n)}$-time algorithm for~$T_{x,y}$ on 
simple graphs would violate \#ETH.
Distressingly, this result not only left open one line, but also left a gap in 
the running time.
Curticapean~\cite{curticapean2015block} introduced the technique of block 
interpolation to close the running time gap: Under \#ETH, there does not exist 
an $\exp\paren[\big]{o(n)}$-time algorithm for~$T_{x,y}$ on simple graphs at any 
hard point $(x,y)$ with $y\ne 1$.\footnote{The conference version 
  of~\cite{curticapean2015block} does not handle the case $y=0$, but the full 
  version (to appear) does}

\medskip\noindent{\textbf{Our contributions:} }
We resolve the complexity of the missing line $y=1$ under \#ETH.
On this line, the Tutte polynomial counts forests weighted in some way, and the 
main result is the following theorem.
\begin{theorem}[Forest counting is hard under \#ETH]
  \label{thm:forest counting}
  If \#ETH holds, then there exist constants $\epsilon,C>0$ such that 
  no $O(\exp(\epsilon n))$-time algorithm can compute the number of all forests 
  in a given simple $n$-vertex graph with at most $C\cdot n$ edges.
\end{theorem}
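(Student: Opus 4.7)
The plan is to reduce from a known \cETH-hard problem to counting forests in simple sparse graphs, using an edge-gadget interpolation argument refined by block interpolation. A convenient source is the problem of recovering specific coefficients of the forest generating polynomial $F_G(x) = \sum_{A \subseteq E(G),\, A \text{ acyclic}} x^{|A|}$ of an input graph $G$; such an auxiliary problem can be shown \cETH-hard on sparse graphs by starting from counting matchings or from another already-known-hard Tutte evaluation point in the Jaeger--Vertigan--Welsh dichotomy. Evaluating the total forest count $T(G'; 2, 1)$ on a suitably modified graph $G'$ will yield evaluations of $F_G$ up to a normalization, and the unknown coefficients are then recovered by linear algebra.

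First, I would construct a simple edge gadget $H_k$ on $O(1)$ vertices, parameterized by an integer $k$, such that substituting $H_k$ for each edge of $G$ produces a simple graph $G^{(k)}$ with
\begin{equation*}
  T(G^{(k)}; 2, 1) \;=\; \alpha(k)^{|V(G)|} \cdot \sum_{\substack{A \subseteq E(G) \\ A \text{ acyclic}}} \beta(k)^{|A|}\, \gamma(k)^{|E(G)|-|A|}\,,
\end{equation*}
for explicit elementary functions $\alpha, \beta, \gamma$. The cycle-freeness of $A$ enters the sum because any cycle of $G$ threaded through the gadgets would already create a cycle in $G^{(k)}$ itself, preventing the corresponding subset from contributing to the forest count. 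Sufficiently many evaluations of $k$ would then yield a Vandermonde-type linear system in the unknown counts $F_j(G)$ (the number of acyclic subsets of $E(G)$ of size $j$), and inverting this system reconstructs $F_G(x)$. The naive version of this interpolation, however, would construct $\Theta(m)$ separate graphs of sizes growing up to $\Theta(n \cdot m)$, and so on its own only yields an $\exp\paren*{\Omega(\sqrt{n})}$ lower bound.

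Second, I would apply the block-interpolation framework of Curticapean~\cite{curticapean2015block}: partition $E(G)$ into blocks $B_1, \ldots, B_{m/b}$ of size $b = \Theta(\log n)$, and build a small number of combined graphs in which the $i$-th block is replaced by a single gadget that simultaneously encodes all $2^b$ possible selection patterns of edges in $B_i$ at one parameter value. The forest count of each combined graph becomes a sum with one pattern per block, and evaluating $\poly(n)$ such combined graphs yields a block-wise linear system whose unknowns are the block-level acyclic-subset statistics of $G$. From these statistics the coefficients $F_j(G)$ can be reconstructed, while crucially the total number of vertices used across all oracle calls remains linear in $n$, producing the desired $\exp(\epsilon n)$ lower bound with bounded edge-to-vertex ratio.

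The main obstacle is twofold. Combinatorially, since we are restricted to simple graphs, the gadget cannot rely on parallel edges, and so standard multivariate-Tutte-style ``weighted edge'' constructions are unavailable; they must be replaced by a carefully chosen simple subgraph whose forest polynomial has the right product structure upon substitution. Algebraically, the block-level system produced by this construction does not literally come from univariate polynomial interpolation, and so its invertibility must be established directly --- for instance, by factoring the coefficient matrix as a tensor product of small non-singular matrices, or by exhibiting an explicit dual basis. This non-standard application of block interpolation to a system of linear equations that is not a Vandermonde interpolation system is precisely the new ingredient emphasized in the introduction, and it is where most of the technical work of the proof should concentrate.
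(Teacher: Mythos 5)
Your high-level framework — reduce from a known \cETH-hard problem, use edge gadgets to get a Vandermonde-type system, and apply block interpolation to control the blow-up — is the right shape, and your observation that a $k$-stretch yields $T(G^{(k)};2,1)=\sum_{A\text{ acyclic}}(2^k-1)^{|E(G)|-|A|}$ matches the paper's Lemma~\ref{lem:kstretch}. But there is a genuine gap at the very start of the chain, and a smaller one in the block interpolation step.

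The main gap is the base case. You write that recovering the coefficients of $F_G(x)=\sum_{A\text{ acyclic}}x^{|A|}$ ``can be shown \cETH-hard on sparse graphs by starting from counting matchings or from another already-known-hard Tutte evaluation point,'' but neither route is substantiated, and neither works as stated. The second route is circular: $F_G$ determines exactly the line $y=1$ of the Tutte polynomial and nothing else, so no other Tutte evaluation reduces to it without precisely the kind of argument you are trying to make. The first route requires a concrete gadget that converts perfect-matching counts of $G$ into a coefficient of a forest polynomial of some \emph{modified} graph, and this is not automatic: for an unmodified $G$, the coefficient of $x^{n/2}$ in $F_G$ counts all acyclic $(n/2)$-edge subgraphs, of which perfect matchings are a tiny and not obviously extractable fraction. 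The paper's key move here is the apex construction (Lemma~\ref{lem: apex}): add a universal vertex whose incident edges carry weight $-1$ while the original edges carry weight $w$; the resulting factor $\prod_{T\in c(A)}(1-|T|)$ annihilates every acyclic $(n/2)$-edge subgraph except the perfect matchings, so that $[w^{n/2}]F(G')$ is exactly the number of perfect matchings of $G$. Without this (or an equivalent) device, your chain never connects to a problem whose \cETH-hardness is already established, and the theorem does not follow.

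A second, smaller issue is the block interpolation step. You propose blocks of size $\Theta(\log n)$ and a single block gadget that ``simultaneously encodes all $2^b$ selection patterns''; such a gadget would need to distinguish exponentially (in $b$) many states, so its size cannot stay polylogarithmic, and the total instance size would not stay linear in $n$. Curticapean's technique — and the paper's Lemma~\ref{bivariate_to_const} — uses \emph{constant}-size blocks $C=C(\epsilon)$, assigns one fresh variable per block, interpolates over the integer grid $\{0,\dots,C\}^{\Theta(m/C)}$ (giving $2^{\epsilon m}$ oracle calls), and realizes each integer weight as a bounded-multiplicity parallel edge; a single $3$-stretch then removes the parallel edges while keeping the instance linear. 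Note also that once you introduce the apex, the polynomial is genuinely bivariate (apex edges versus original edges), so the block interpolation must handle two variable families, which the paper's Lemma~\ref{bivariate_to_const} does explicitly. Your instinct that the invertibility argument may need a tensor-product factorization rather than plain univariate Vandermonde is spot on for the bipartite-independent-set half of the paper, but for the forest part the system really is a multivariate grid interpolation and is standard once the apex lemma and constant block size are in place.
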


The fact that the problem remains hard even on simple sparse graphs
makes the theorem stronger.
The previously best known lower bound under \#ETH was that forests cannot be 
counted in time $O\paren*{\exp(n^\delta)}$ where $\delta>0$ is some constant 
depending on the instance blow-up caused by the known \#P-hardness reductions 
for forest counting; our rough estimate suggests that $\delta$ is much smaller 
than $1/2$.
Our approach yields a \#P-hardness proof for forest counting that is simpler 
than the proofs we found in the literature.\footnote{For the \#P-hardness of 
  forest counting, \cite{JVW90} refers to private communication with Mark Jerrum 
  as well as the PhD thesis of Vertigan~\cite{vertigan1991computational}.  A 
  self-contained (but involved) proof appears in ``Complexity of Graph 
  Polynomials'' by Steven D. Noble, chapter~13 
  of~\cite{ginestet2008combinatorics}.}

Combined with all previous results \cite{JVW90,DHMTW14,curticapean2015block}, we 
can now formally state a complete \#ETH dichotomy theorem for the Tutte 
polynomial over the reals.
\begin{corollary}[Dichotomy for the real Tutte plane under \#ETH]
  \mbox{}\\
  Let $(x,y)\in\Q^2$.
  If $(x,y)$ satisfies
  \begin{center}
    $(x-1)(y-1)=1$ or $(x,y)\in\set[\big]{(1,1),(-1,-1),(0,-1),(-1,0)}$,
  \end{center}
  then $T_{x,y}$ can be computed in polynomial time.
  Otherwise $T_{x,y}$ is \#P-hard and, if \#ETH is true, then there exists 
  $\epsilon>0$ 
  such that $T_{x,y}$ cannot be computed in time $\exp(\epsilon n)$,
  even for simple graphs with $n$ vertices
\end{corollary}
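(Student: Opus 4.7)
The plan is to assemble the corollary as a compendium of prior dichotomy results plus the new Theorem~\ref{thm:forest counting}, sorted by the three kinds of claim it makes. For the tractable side, I would invoke the classical polynomial-time algorithms of \cite{JVW90}: the Fortuin--Kasteleyn / $q$-state Potts model at $q=1$ on the hyperbola $(x-1)(y-1)=1$, the matrix-tree theorem at $(1,1)$, and the cycle-space parity algorithms at the three other exceptional integer points $(-1,-1)$, $(0,-1)$, $(-1,0)$. The \#P-hardness claim at every remaining rational point is the Jaeger--Vertigan--Welsh dichotomy \cite{JVW90}, which I would quote verbatim.

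For the \#ETH lower bounds at hard points with $y\ne 1$, I would cite Dell et al.~\cite{DHMTW14} for the initial $\exp(\Omega(n/\log^3 n))$ bound and then Curticapean's block interpolation \cite{curticapean2015block} to remove the $\log^3 n$ factor and obtain the desired $\exp(\epsilon n)$ bound on simple graphs. Both results are used as black boxes, with no reworking of the underlying reductions.

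The only case not covered by prior work is the line $y=1$ with $x\ne 1$. At $(2,1)$ this is forest counting, handled directly by Theorem~\ref{thm:forest counting}. For any other rational $x_0\ne 1$ on this line I would reduce forest counting to $T_{x_0,1}$ via the identity
\begin{equation*}
T(G;x,1) \;=\; \sum_{j\ge 0} f_j(G)\,(x-1)^j,
\end{equation*}
where $f_j(G)$ counts spanning forests of $G$ with exactly $k(G)+j$ components and $\sum_j f_j(G)$ is the total number of forests. Recovering the coefficient vector $(f_0,\ldots,f_n)$ from evaluations of $T_{x_0,1}$ at a fixed $x_0$ requires several effective evaluation points, which I would produce by invoking the paper's own generalization of block interpolation to systems of linear equations (announced in the introduction): construct a single simple $O(n)$-vertex graph whose $T_{x_0,1}$-value, read alongside a Vandermonde-type linear system, determines all $f_j(G)$. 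The hard part will be keeping this reduction linear in $n$; naive gadgets such as edge subdivision or parallel-edge thickening would blow the vertex count up by a factor of $\Omega(n)$ and only propagate Theorem~\ref{thm:forest counting} as a weaker $\exp(n^\delta)$-type bound, which is exactly the regime the block interpolation framework is designed to escape.
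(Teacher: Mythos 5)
Your overall architecture for the corollary matches the paper's: the tractable points, the \#P-hardness, and the \#ETH lower bounds at hard points with $y\neq 1$ are indeed taken as black boxes from \cite{JVW90}, \cite{DHMTW14} and \cite{curticapean2015block}, and the only new content is the line $y=1$. The gap is in how you cover that line at points other than $(2,1)$. The paper never reduces from forest counting to other points: it proves Theorem~\ref{thm:forest counting line}, which asserts hardness of $G\mapsto T(G;x,1)$ on sparse simple graphs for \emph{every} fixed $x\neq 1$, and Theorem~\ref{thm:forest counting} is merely its special case $x=2$. That theorem is obtained by running the whole chain from counting perfect matchings uniformly in $x$: the apex construction (Lemma~\ref{lem: apex}), block interpolation in which the grid points are realized by bounded edge multiplicities all carrying one weight $z$ (Lemma~\ref{bivariate_to_const}), and a $3$-stretch (Lemma~\ref{lem:kstretch}) with $z=g_3\paren[\big]{1/(x-1)}$, which is well-defined and nonzero for every $x\neq1$ because the stretch length is odd. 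Once that is in hand, the corollary needs no further work on the line $y=1$.

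Your substitute for this step is not yet a proof. The concrete construction you sketch --- ``a single simple $O(n)$-vertex graph whose $T_{x_0,1}$-value, read alongside a Vandermonde-type linear system, determines all $f_j(G)$'' --- cannot work as stated: a single evaluation of a single graph is one number, whereas you need $n+1$ coefficients, so any interpolation-style argument requires many queries (block interpolation uses $2^{\epsilon n}$ of them, each of linear size), and you specify no gadget family realizing the needed distinct effective weights at the fixed point $x_0$. Moreover, the natural repair --- block interpolation where each block variable is shifted through constant-length $k$-stretches, whose effective weights are $g_k(t_0)=1/\bigl((1+1/t_0)^k-1\bigr)$ with $t_0=1/(x_0-1)$ --- breaks down exactly at the hard points $x_0=0$ and $x_0=-1$, where these values coincide for all usable $k$ or are undefined, so extra gadgets would have to be designed and verified there. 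You correctly flag that keeping the reduction linear in $n$ is ``the hard part,'' but that hard part is precisely what remains open; deferring it to ``the paper's generalization of block interpolation to systems of linear equations'' does not close it, since that generalization is used in the paper only for the Kronecker-power argument behind Theorem~\ref{thm:bishard}, not for the Tutte line. To fix the write-up, either prove the line-$y=1$ hardness directly for all $x\neq 1$ as in Theorem~\ref{thm:forest counting line}, or carry out your $(2,1)\to(x_0,1)$ reduction in full detail, including the degenerate points $x_0\in\set{0,-1}$.
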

The result also holds for sparse simple graphs.
We stated the results only for rational numbers in order to avoid insightless 
discussions about how real numbers should be represented.

For the proof of Theorem~\ref{thm:forest counting}, we establish a reduction 
chain that starts with the problem of counting perfect matchings on sparse 
graphs, which is known to be hard under \#ETH.
As an intermediate step, we find it convenient to work with the multivariate 
forest polynomial as defined, for example, by Sokal~\cite{Sokal2005}.
After a simple transformation of the graph, we are able extract the number of 
perfect matchings of the original graph from the multivariate forest polynomial 
of the transformed graph, even when only two different variables are used.
Subsequently, we use Curticapean's idea of block 
interpolation~\cite{curticapean2015block} to reduce the problem of computing all 
coefficients of the bivariate forest polynomial to the problem of evaluating the 
univariate forest polynomial on multigraphs where all edge multiplicity are 
bounded by a constant.
Finally, we replace parallel edges with parallel paths of constant length to 
reduce to the problem of evaluating the univariate forest polynomial on simple 
graphs.
 
\subsection{\#CSP over the Boolean domain under \#ETH}

In the second part of this paper, we consider constraint satisfaction problems 
(CSPs) over the Boolean domain, which are a natural generalization of the 
satisfiability problem for $k$-CNF formulas.
A~\emph{constraint} is a relation $R\subseteq\zo^k$ for some~$k\in\N$, and a 
set~$\Gamma$ of constraints is a \emph{constraint language}.
CSP$(\Gamma)$ is the constraint satisfaction problem where all constraints 
occurring in the instances are of a type contained in~$\Gamma$, and 
\#CSP$(\Gamma)$ is the corresponding counting version, which wants to compute 
the number of satisfying assignments.
If all constraints happen to be \emph{affine}, that is, they are linear 
equations over $\GF{2}$, then the number of solutions can be determined in 
polynomial time by applying Gaussian elimination and determining the size of the 
solution space.
Creignou and Hermann~\cite{creignou1996complexity} prove that, as soon as you 
allow even just one constraint type that is not affine, the resulting CSP 
problem is \#P-hard.

\medskip\noindent{\textbf{Our contributions:} }
We prove that all Boolean \#CSPs that are \#P-hard are also hard under \#ETH.
The \#P-hardness is established in~\cite{creignou1996complexity} by reductions 
from counting independent sets in bipartite graphs, which we prove to be hard in 
the following theorem.

\begin{theorem}[Counting independent sets in bipartite graphs is hard under \#ETH]
  \label{thm:bishard}
  If \#ETH holds, then there exist constants $\epsilon>0$ and $D\in\N$ such that 
  no $O(\exp(\epsilon n))$-time algorithm can compute the number of independent 
  sets in bipartite $n$-vertex graphs of maximum degree at most~$D$.
\end{theorem}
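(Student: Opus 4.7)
The plan is to exhibit a Turing reduction, with linear blow-up in the number of vertices, from a counting problem that is already known to be \#ETH-hard with an $\exp(\Omega(n))$ lower bound. A natural starting point is counting independent sets on bounded-degree graphs that are \emph{not} required to be bipartite: hardness here with the required $\exp(\Omega(n))$ lower bound follows from the sparsification lemma combined with the standard reduction from $\#3$SAT to $\#$IS, followed by a local-occurrence reduction to make variable degrees bounded.

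From such a bounded-degree graph $G$, the natural first move is to make the instance bipartite by subdividing each edge $\{u,v\}$ with a single new ``port'' vertex $w_{uv}$, so that the resulting graph $G'$ is bipartite with $V(G)$ on one side and the new ports on the other. The maximum degree on the $V(G)$-side is preserved, the port vertices have degree~$2$, and the total number of vertices stays $O(n)$. A direct calculation shows
\begin{equation*}
\#\mathrm{IS}(G') \;=\; \sum_{S \subseteq V(G)} 2^{m(G[V(G)\setminus S])}\,,
\end{equation*}
which is not yet $\#\mathrm{IS}(G)$: every subset $S$ contributes, not only the independent ones, and with an exponentially weighted factor depending on the edges of $G$ induced by $V(G)\setminus S$.

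To extract $\#\mathrm{IS}(G)$ from variants of the above construction, I would apply the block interpolation technique of Curticapean in the generalised form promised in the introduction of the paper. The idea is to replace each edge of $G$ not by a single ``port'' vertex but by one member of a small, constant-size family of bipartite bounded-degree gadgets. Each gadget has a $2\times 2$ transfer-matrix signature indexed by the ``states'' of its two terminal vertices, and by varying the per-edge gadget choice one obtains a polynomial-size family of bipartite bounded-degree graphs $H_1,\dots,H_N$, each of size $O(n)$. Their independent-set counts $\#\mathrm{IS}(H_i)$ form a linear system in the weighted subset-counts $\sum_{S \subseteq V(G)} \prod_{e \in E(G)} f(S \cap e)$ over the various local functions $f$ encoded by the gadget choices; solving this system and reading off the monomial that corresponds to subsets with no induced edge yields $\#\mathrm{IS}(G)$.

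The main obstacle is the linear algebra of this inversion step: the signatures of natural small bipartite bounded-degree gadgets are algebraically related and do not automatically yield a full-rank coefficient matrix of standard Vandermonde type. This is precisely the setting in which the paper's extension of block interpolation beyond classical polynomial interpolation is needed, and I expect it to be handled by an explicit choice of a few carefully designed gadgets together with a direct rank computation. Once such a gadget family is fixed, the rest is routine bookkeeping: the number of queries to the bipartite $\#$IS oracle is $\poly(n)$, each query graph has $O(n)$ vertices and constant maximum degree, so an $\exp(\epsilon n)$-algorithm for counting independent sets in bipartite bounded-degree graphs would yield an $\exp(O(\epsilon)n)$-algorithm for $\#\mathrm{IS}$ on bounded-degree graphs, contradicting \#ETH for sufficiently small $\epsilon>0$.
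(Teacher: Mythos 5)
Your plan is structurally aligned with the paper: reduce from counting independent sets in bounded-degree graphs (which the paper gets from Curticapean's Theorem~\ref{thm:curt}), make the instance bipartite by replacing edges with gadgets, and invert a linear system to recover the answer. However, the proposal has a genuine gap: you explicitly flag the choice of gadget family and the invertibility of the resulting coefficient matrix as ``the main obstacle'' and then do not resolve it. That obstacle is the entire content of the proof. The paper's resolution is the Provan--Ball gadget $H_\ell$: replace each edge $\{u,v\}$ by $\ell$ internally disjoint paths of length $4$ from $u$ to $v$. This is bipartite and bounded degree, and the number of vertex covers of $H_\ell$ conditioned on which of $u,v$ lie in the cover is $2^\ell$, $3^\ell$, $3^\ell$, $5^\ell$ respectively. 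Partitioning $E(G)$ into $m/d$ blocks and using a parameter $\ell_i\in\{1,\dots,(d+1)^3\}$ per block yields a system whose coefficient matrix is $A^{\otimes m/d}$, where $A_{\ell,\tau}=(2^{\tau_1}3^{\tau_2}5^{\tau_3})^\ell$ is a transposed Vandermonde matrix; it is invertible because the evaluation points $2^{\tau_1}3^{\tau_2}5^{\tau_3}$ are pairwise distinct by unique factorization. Without such an explicit gadget and rank argument, the claim ``I expect it to be handled'' is a placeholder rather than a proof.

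A secondary error: you assert the number of oracle queries is $\poly(n)$. It is not, and it cannot be if you want to keep each query linear in size. In block interpolation the query count is $(d+1)^{O(m/d)}=2^{O(m\log d/d)}$, i.e.\ subexponential with an exponent that can be made an arbitrarily small constant by choosing $d$ large; the trade-off of subexponentially many queries for linear-size instances is precisely what distinguishes block interpolation from ordinary interpolation (which would give polynomially many queries of polynomially large size and destroy the $\exp(\Omega(n))$ lower bound). Your single-subdivision warm-up is also a dead end in this respect: a single subdivision gives only one gadget, with no parameter to vary, so it produces one equation in exponentially many unknowns and cannot be the basis of an inversion without the $\ell$-dependent family.
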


The fact that the problem is hard even on graphs of bounded degree makes the 
theorem stronger.
We remark that the number of independent sets in bipartite graphs has a 
prominent role in counting complexity.
Currently, the complexity of \emph{approximating} this number is unknown, and 
many problems in approximate counting turn out to be polynomial-time equivalent 
to approximately counting independent sets in bipartite graphs. 
Theorem~\ref{thm:bishard} shows that this mysterious situation does not occur 
for the exact counting problem in the exponential-time setting: it is just as 
hard as counting satisfying assignments of $3$-CNFs.

The $\#\P$-hardness of counting independent sets in bipartite graphs was 
established by Provan and Ball \cite{provan1983complexity}.
The main ingredient in their proof is a system of linear equations that does not 
seem to correspond to polynomial interpolation directly.
We prove Theorem~\ref{thm:bishard} by transferring the block interpolation idea 
from~\cite{curticapean2015block} to this system of linear equations, which we do 
using a Kronecker power of the original system.

Theorem~\ref{thm:bishard} combined with existing reductions 
in~\cite{creignou1996complexity} yields the fine-grained dichotomy.
\begin{corollary}[Creignou and Hermann under \#ETH] \label{thm:candh}
  Let $\Gamma$ be a finite constraint language.
  If every constraint in~$\Gamma$ is affine, then $\#\mathrm{CSP}(\Gamma)$ has a 
  polynomial-time algorithm.
  Otherwise $\#\mathrm{CSP}(\Gamma)$ is $\#\P$-complete, and if \#ETH holds, it 
  cannot be computed in time $\exp(o(n))$ where $n$ is the number of variables.
\end{corollary}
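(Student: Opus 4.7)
The plan is to prove Corollary~\ref{thm:candh} by combining Theorem~\ref{thm:bishard} with the existing reductions of Creignou and Hermann~\cite{creignou1996complexity}, verifying that these reductions produce instances whose number of variables is linear in the number of vertices of the input bipartite graph.

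The polynomial-time algorithm for the affine case is entirely standard: an instance of $\#\mathrm{CSP}(\Gamma)$ with only affine constraints is a system of linear equations over $\GF{2}$, whose number of satisfying assignments is $2^{n-r}$ (or $0$), where $r$ is the rank of the system; this can be computed by Gaussian elimination. For the hardness direction, I plan to follow the approach of Creignou and Hermann. When $\Gamma$ contains a non-affine constraint, their dichotomy proof provides a Turing reduction from counting independent sets in bipartite graphs to $\#\mathrm{CSP}(\Gamma)$, based on a case analysis in which one expresses a suitable ``seed'' relation (such as $\mathrm{NAND} = \setc{(x,y)\in\zo^2}{x\land y = 0}$) as a primitive-positive formula over $\Gamma$ together with constants. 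Once the linearity of the reduction is established, the desired lower bound is immediate from Theorem~\ref{thm:bishard}.

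The main step is therefore to re-examine the Creignou--Hermann case distinction and verify that, in each case, the gadget used to implement the seed relation has a constant number of variables depending only on~$\Gamma$, and not on the input. Granting this, I take a bipartite $n$-vertex graph~$H$ of maximum degree~$D$ from Theorem~\ref{thm:bishard}, introduce one variable per vertex, and replace each of the at most $Dn/2$ edges by a constant-size gadget, yielding a $\#\mathrm{CSP}(\Gamma)$-instance on $O(n)$ variables whose number of satisfying assignments determines the number of independent sets in~$H$ (up to a fixed factor coming from the auxiliary variables of each gadget, which can be divided out). An $\exp(o(n))$-time algorithm for $\#\mathrm{CSP}(\Gamma)$ would then contradict Theorem~\ref{thm:bishard}.

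The main obstacle I expect is the case where the Boolean constants $0$ and~$1$ are not themselves pp-definable from $\Gamma$; here the reduction must proceed without free access to constants. This can be handled by adding a constant number of global auxiliary variables, enumerating their possible assignments (only $O(1)$ many), and summing the resulting counts; since the number of such global variables is bounded by a constant depending only on $\Gamma$, the overall variable count remains $O(n)$, which is what the application of Theorem~\ref{thm:bishard} requires.
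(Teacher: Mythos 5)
Your proof is essentially correct and close in spirit to the paper's, but it takes a somewhat different route through the Creignou--Hermann reduction. The paper does not attempt to re-derive or unify the Creignou--Hermann case analysis: it instead routes everything through the two exact source problems that Creignou and Hermann reduce from, namely $\psat$ (positive $2$-SAT, equivalent to counting vertex covers in general graphs) and $\isat$ (implicative $2$-SAT). The paper's Observation~\ref{obs:candh} just records that those reductions blow up the number of variables only linearly; $\psat$ is then handled directly by Theorem~\ref{thm:curt} (counting independent sets in \emph{general} bounded-degree graphs), while $\isat$ is handled by the short Lemma~\ref{lem:imphard}, which reduces $\#\mathrm{BIS}$ to $\isat$ using Theorem~\ref{thm:bishard}. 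You instead propose a single, unified reduction from $\#\mathrm{BIS}$ for all non-affine $\Gamma$, via a pp-definable ``seed'' relation plus the usual pinning trick for recovering constants. That is workable, and it is aesthetically cleaner in that it uses only Theorem~\ref{thm:bishard} as the hardness source, but it costs you more: you have to re-open the Creignou--Hermann case analysis and verify in each case that the seed relation can be implemented with a constant-size gadget, whereas the paper avoids this entirely by treating their reductions as a black box with linear blowup.

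One small imprecision worth flagging: you write that Creignou and Hermann's proof ``provides a Turing reduction from counting independent sets in bipartite graphs to $\#\mathrm{CSP}(\Gamma)$.'' This is not literally what they prove; their reductions start from $\psat$ and $\isat$. Your claim is \emph{recoverable} because $\#\mathrm{BIS}$ reduces to each of these (restrict $\psat$ to bipartite instances to get counting vertex covers in bipartite graphs, which equals $\#\mathrm{BIS}$; and $\#\mathrm{BIS}$ reduces to $\isat$ as in Lemma~\ref{lem:imphard}), so a chain through them does yield a reduction from $\#\mathrm{BIS}$. But presenting it as something Creignou and Hermann already established overstates their result, and in a write-up you would need to make the composite reduction explicit. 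You should also be a little more careful about the ``auxiliary global variables'' step in the complementive case: the issue there is not just pp-definability of the constants, but that a complementive $\Gamma$ cannot pp-define any non-self-complementary relation such as NAND even with pinning; this is exactly why Creignou and Hermann switch to the implicative $2$-SAT source problem in that case, and your unified sketch would need to incorporate that distinction rather than relying on a single NAND gadget throughout.
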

We consider Corollary~\ref{thm:candh} to be a first step towards understanding 
the fine-grained complexity of technically much more challenging dichotomies, 
such as the ones for counting CSPs with complex weights of Cai and 
Chen~\cite{cai2012complexity}, or the dichotomy for Holant problems with 
symmetric signatures over the Boolean domain of Cai, Lu and 
Cia~\cite{cai2011computational}.  

\section{Preliminaries}

Given a matrix $A$ of size $m_1 \times n_1$ and a matrix $B$ of size $m_2 \times n_2$ their \emph{Kronecker product} $A \otimes B$ is a matrix of size $m_1 m_2 \times n_1n_2$ given by
\[A \otimes B =
\begin{bmatrix}
	a_{11}B & \hdots & a_{1n}B\\
	\vdots & \vdots & \vdots\\
	a_{m1}B & \hdots & a_{mn}B
\end{bmatrix}
\,.
\]

Let $A^{\otimes n}$ be the matrix defined by
$A^{\otimes 1} = A$ and $A^{\otimes n+1} = A \otimes A^{\otimes n}$.
Furthermore, if~$A$ and~$B$ are quadratic matrices of size $n_a$ and $n_b$, 
respectively, then $\mathrm{det}(A \otimes B) = \mathrm{det}(A)^{n_b} \cdot 
\mathrm{det}(B)^{n_a}$

The exponential time hypothesis (ETH) by Impagliazzo and Paturi~\cite{IP01} is 
that satisfiability of $3$-CNF formulas cannot be computed substantially faster 
than by trying all possible assignments.
The counting version of this hypothesis~\cite{DHMTW14} reads as follows:

\vspace{12pt}
\begin{minipage}{.95\linewidth}
  \hspace{\parindent}
  (\#ETH)
  \quad
  \begin{minipage}{.85\linewidth} \textit{
    There is a constant $c>0$ such that no deterministic algorithm
    can compute \pp{\#$3$-SAT} in time~$\exp(c\cdot n)$, where $n$ is the number 
    of variables.}
  \end{minipage}
\end{minipage}
\vspace{12pt}

A different way of formulating \#ETH is to say no algorithm can compute 
\pp{\#$3$-SAT} in time $\exp(o(n))$.
The latter statement is clearly implied by the formal statement, and it will be 
more convenient for discussion to use this form.

The sparsification lemma by Impagliazzo, Paturi, and Zane~\cite{ImpagliazzoPZ01} 
is that every $k$-CNF formula $\varphi$ can be written as the disjunction of 
$2^{\epsilon n}$ formulas in $k$-CNF, each of which has at most 
$c(k,\epsilon)\cdot n$ clauses.
Moreover, this disjunction of sparse formulas can be computed from~$\varphi$ 
and~$\epsilon$ in time $2^{\epsilon n}\cdot\poly(m)$.
The density $c=c(k,\epsilon)$ is the \emph{sparsification constant}, and the 
best known bound is $c(k,\epsilon)=(k/\epsilon)^{3k}$~\cite{CalabroIP06}.
It was observed~\cite{DHMTW14} that the disjunction can be made so that every 
assignment satisfies at most one of the sparse formulas in the disjunction, and 
so the sparsification lemma applies to \#ETH as well.
In particular, \#ETH implies that \pp{\#$3$-SAT} cannot be computed in time 
$\exp(o(m))$, where $m$ is the number of clauses.

We also make use of the following result, whose proof is based on block 
interpolation.
\begin{theorem}[Curticapean~\cite{curticapean2015block}]\label{thm:curt}
  If \#ETH holds, then there are constants $\epsilon,D > 0$ such that neither of 
  the following problems have $O(2^{\epsilon n})$-time algorithms, even for 
  simple $n$-vertex graphs~$G$ of maximum degree at most~$D$:
  \begin{itemize}
    \item Computing the number of perfect matchings of $G$.
    \item Computing the number of independent sets of $G$.
  \end{itemize}
\end{theorem}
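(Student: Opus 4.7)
My plan is to start from \#ETH for \#$3$-SAT in the sparse setting: by the sparsification lemma, \#$3$-SAT on $n$ variables and $O(n)$ clauses cannot be solved in time $\exp(o(n))$. From such a sparse formula $\varphi$, I would construct a simple bounded-degree graph $G$ on $O(n)$ vertices whose number of perfect matchings (respectively, independent sets) encodes the number of satisfying assignments of $\varphi$, so that an $O(2^{\epsilon n})$-time algorithm for the counting problem on $G$ would give an $\exp(o(n))$-time algorithm for \#$3$-SAT and thus violate \#ETH.

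For perfect matchings I would begin with the classical Valiant-style chain of local gadget reductions that realize variables, clauses, and wiring as constant-size pieces, producing a graph of linear size and bounded degree but with small rational edge weights. Clearing these weights is traditionally done by polynomial interpolation: replacing a weighted edge by $k$ parallel edges turns the perfect matching count into a polynomial in $k$ that can be recovered from sufficiently many evaluations. Applied naively, each evaluation lives in a separate graph of size $\Theta(k n)$, so the $\Omega(n)$ evaluations needed for a global polynomial in all weights cost $\omega(n)$ vertices in total and destroy the tight exponent. The block interpolation trick of Curticapean merges all evaluations into a single graph: partition the weighted edges into blocks of constant size, instantiate all distinct edge multiplicities simultaneously inside each block, and thereby read the desired unweighted counts off a multivariate polynomial whose total degree is bounded block by block. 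The resulting linear system is a Kronecker power of a small invertible Vandermonde-like block, so by the identity $\det(A\otimes B)=\det(A)^{n_b}\det(B)^{n_a}$ recalled in the preliminaries it is invertible over $\Q$, and can be solved in time $2^{O(n)}$, which is absorbed into the hardness assumption.

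To enforce simplicity and a bounded maximum degree, I would replace parallel edges by constant-length parallel paths, relating the new matching count to the old one by a closed-form factor, and replace any high-degree vertex by a standard constant-size degree-reduction gadget; both operations commute with block interpolation. For independent sets I would use the analogous recipe: either invoke the line-graph correspondence between matchings and independent sets, passing first to (non-necessarily-perfect) matchings via a further block-interpolation step over a vertex-weight parameter, or redo the Valiant chain directly with constant-size independent-set gadgets. Since line graphs of bounded-degree graphs have bounded degree, the degree bound is preserved. The main obstacle I expect is the careful verification that the small base matrices arising from block interpolation are invertible over $\Q$ and that the weights chosen to populate them avoid the polynomial-time-tractable specializations; once that is established, the Kronecker identity lifts invertibility to the full system and the rest is bookkeeping to ensure that no stage of the reduction inflates the instance by a superlinear factor in $n$, which yields Theorem~\ref{thm:curt}.
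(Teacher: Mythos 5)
This theorem is not proved in the paper at all; it is imported as a black box from Curticapean~\cite{curticapean2015block} and used only as a hardness source, so there is no in-paper proof to compare against. Your sketch does correctly identify the essential ingredients of the cited argument: sparsify via \#ETH, run a Valiant-style gadget chain to a weighted counting problem of linear size, and then apply block interpolation --- partition the weighted positions into constant-size blocks, evaluate over a grid, and invert a Kronecker power of a small Vandermonde-type matrix --- to recover the unweighted count with only a constant-factor blow-up, followed by local stretches and degree-reduction gadgets to make the instances simple and bounded-degree. This is the same mechanism that the present paper itself re-derives and reuses in Lemma~\ref{bivariate_to_const} and in the proof of Theorem~\ref{thm:bishard}, so your account of the method is accurate.

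The one place your proposal is on thinner ice is the independent-set branch. The line-graph detour you suggest (perfect matchings $\to$ all matchings via a second interpolation over a vertex-weight parameter $\to$ independent sets in line graphs) would in principle preserve bounded degree and linear size, but it layers an additional interpolation whose cost you would have to track, and it is not what~\cite{curticapean2015block} does: there the independent-set hardness comes from a direct gadget reduction from a sparse \#SAT variant, parallel in structure to the perfect-matching case. Both routes ultimately rest on the same block-interpolation machinery, so this is a difference of packaging rather than of substance, but if you were to fill in the details you would find the direct chain cleaner and it is the one actually established in the cited work.
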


\section{Counting forests is \#ETH-hard}
Let $\forests(G)$ be the set of all \emph{forests} of $G$, that is, edge subsets 
$A\subseteq E(G)$ such that the graph $(V(G),A)$ is acyclic.
For $y=1$, only the terms with $k(A)+\abs{A}-\abs{V}=0$ survive, and we get the 
following:
\begin{align*}
  T(G;x,1)
  =
  \sum_{\substack{A\in\forests(G)}}
  \paren{x-1}^{k(A)-k(E)}
  \,.
\end{align*}
We want to prove that, for every fixed $x\ne 1$, computing the value $T(G;x,1)$ 
for a given graph~$G$ is hard under \cETH{}.
In particular, this is true for $T(G;2,1)$, which is the number of forests 
in~$G$.
The goal of this section is to prove the following theorem.
\begin{theorem}\label{thm:forest counting line}
  Let $x\in\R\setminus\set{1}$.
  If \cETH{} holds, then there exist $\epsilon,C>0$ such that the 
  function that maps simple $n$-vertex graphs~$G$ with at most 
  $C\cdot n$ edges to the value $T(G;x,1)$ 
  cannot be computed in time $2^{\epsilon n}$.
\end{theorem}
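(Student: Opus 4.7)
The plan is to build a reduction chain from counting perfect matchings in bounded-degree simple graphs, which is \#ETH-hard by Theorem~\ref{thm:curt}, to evaluating $T(\cdot;x,1)$ on simple sparse graphs. As an intermediate object I use the multivariate forest polynomial of Sokal,
\[
F(G;q,\mathbf{w}) \;=\; \sum_{A\in\forests(G)} q^{k(A)}\prod_{e\in A} w_e,
\]
which at $w_e\equiv 1$, $q=x-1$ equals $T(G;x,1)$ up to the explicit factor $(x-1)^{-k(E(G))}$. Throughout, $q=x-1$ is fixed and nonzero by assumption.

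The first step is a gadget construction. Given a bounded-degree simple graph $G_0$ on $n$ vertices, I would build a graph $G'$ with $|V(G')|,|E(G')|=O(n)$ together with a partition $E(G')=E_a\sqcup E_b$ of its edges into two classes. Substituting $w_e=a$ on $E_a$ and $w_e=b$ on $E_b$ turns $F(G';x-1,a,b)$ into a bivariate polynomial in $(a,b)$. The gadget has to be designed so that the coefficient of one specific monomial $a^{\alpha_0}b^{\beta_0}$ equals $\#\pp{PM}(G_0)$ up to a known nonzero factor. A natural candidate is to subdivide each edge of $G_0$ and attach small pendant trees at each new midpoint, with $a$-weights on the ``edge-selection'' halves and $b$-weights on the pendants, arranged so that the local forest generating function of each gadget contributes to the monomial $a^{\alpha_0}b^{\beta_0}$ precisely when the corresponding original edge is selected in a perfect matching. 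I expect this step to be the main obstacle, because $F$ aggregates over all acyclic subsets of $G'$, so the gadget has to be tight enough to force the contributions of non-matching configurations to cancel at the chosen coefficient; verification reduces to a finite local calculation per gadget combined with a transfer-matrix-style multiplicativity argument across $G_0$.

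The second step adapts Curticapean's block interpolation~\cite{curticapean2015block} to replace $a$ and $b$ by many block variables of bounded degree. Fix a constant $s$ to be chosen later, partition $E_a$ and $E_b$ into blocks of size at most $s$, and assign a fresh weight variable to each block; the resulting multivariate polynomial has $k=O(n/s)$ variables and degree at most $s$ in each. All its coefficients can be recovered by standard multivariate Lagrange interpolation from $(s+1)^k=2^{O((n/s)\log s)}$ evaluations at integer points in $\{0,1,\ldots,s\}^k$, and summing the appropriate subset of them yields the targeted bivariate coefficient. Each such evaluation equals $F(H;x-1,\mathbf{1})$ for a multigraph $H$ obtained from $G'$ by replacing every edge in each block by a number of parallel copies equal to the integer weight assigned to that block; thus $|V(H)|=O(n)$ and every edge multiplicity of $H$ is at most $s$.

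The third step replaces each bundle of $\mu\le s$ parallel edges of $H$ by $\mu$ internally disjoint paths of a common constant length $\ell=\ell(x)$, producing a simple bounded-degree graph $\widehat H$ with $O(ns)$ vertices and edges. Since $x\ne 1$, replacing a single edge by a path of length $\ell$ multiplies $T(\cdot;x,1)$ by a nonzero constant depending only on $x$ and $\ell$, which we divide out; this is exactly where the exclusion $x=1$ is needed, since at $x=1$ the path-to-edge transformation becomes singular. Composing the three steps, a $2^{\epsilon' N}$-time algorithm for $T(\cdot;x,1)$ on simple sparse $N$-vertex graphs would yield a $2^{O((n/s)\log s)+\epsilon' O(ns)}$-time algorithm for $\#\pp{PM}$ on bounded-degree $n$-vertex graphs; taking $s$ large first and then $\epsilon'$ sufficiently small drives both terms below the hardness threshold of Theorem~\ref{thm:curt}, yielding the desired contradiction and completing the proof.
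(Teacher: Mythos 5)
Your high-level architecture (perfect matchings $\to$ two-weight multivariate forest polynomial $\to$ block interpolation $\to$ stretch to simple graphs) matches the paper's, but two of the three steps as written do not go through.

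\textbf{Step 1 is not a proof.} You propose a gadget (subdivision plus pendant trees) and then explicitly say you expect it to be ``the main obstacle'' and defer verification to an unwritten ``finite local calculation.'' This is precisely the crux. The paper's construction is both different and much simpler: add a single apex vertex $a$ joined to every $v\in V(G)$ by an edge of weight $z_v=-1$, and set $w_e=w$ on all original edges. Then $F(G')=\sum_{A\in\forests(G)} w^{|A|}\prod_{T\in c(A)}(1-|T|)$, and among forests with exactly $n/2$ edges the product $\prod_{T\in c(A)}(1-|T|)$ vanishes unless every component $T$ has $|T|=2$; hence the coefficient of $w^{n/2}$ is exactly $\#\mathrm{PM}(G)$, with no cancellation-design or transfer-matrix argument required. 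Without something of this kind, your proof has a hole exactly where the novelty lies.

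\textbf{Step 3 contains a concrete error.} You claim that replacing an edge by a path of length $\ell$ multiplies $T(\cdot;x,1)$ by a nonzero constant depending only on $x$ and $\ell$, which you then divide out. This is false: for the univariate forest polynomial, the $k$-stretch satisfies
\[
F(G';w)=\bigl((w+1)^k-w^k\bigr)^m\cdot F\!\Bigl(G;\tfrac{w^k}{(w+1)^k-w^k}\Bigr),
\]
and $\tfrac{w^k}{(w+1)^k-w^k}\neq w$ for every $k\ge 2$, so a stretch \emph{shifts the evaluation point} along the line $y=1$ rather than rescaling the value. In particular, computing $T(\widehat H;x,1)$ does not give you $T(H;x,1)$ up to a scalar. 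This breaks your composition. The fix is to make Step 2 interpolate not at integers $\{0,\dots,s\}^k$ but at multiples $z\cdot\{0,\dots,s\}^k$ of a free parameter $z$; the fattened multigraph then carries edge weight $z$, and choosing $z=g_\ell\!\bigl(\tfrac{1}{x-1}\bigr)$ (for odd $\ell$ so that $g_\ell$ is total) makes the stretch land exactly on the evaluation point corresponding to $(x,1)$. This is what the paper's Lemma~\ref{bivariate_to_const} and Lemma~\ref{lem:kstretch} do in tandem, and it is not optional. Relatedly, your remark that the restriction $x\neq 1$ enters only at the stretch step is off: it is already needed to pass between $T(G;x,1)$ and $F\bigl(G;\tfrac{1}{x-1}\bigr)$.

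Step 2 (block interpolation) is otherwise fine in spirit and follows Curticapean as in the paper, modulo the $z$-scaling issue above.
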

Theorem~\ref{thm:forest counting line} yields Theorem~\ref{thm:forest counting} 
as its special case with $x=2$.

\subsection{The multivariate forest polynomial}
A \emph{weighted graph} is a graph~$G$ in which every edge~$e\in E(G)$ is 
endowed with a weight $w_e$, which is an element of some ring.
We use the \emph{multivariate forest polynomial}, defined e.g.~by
Sokal~\cite[(2.14)]{Sokal2005} as follows:
 \begin{align*}
  F(G;w)
  &=
  \sum_{\substack{A\in\forests(G)}}
  \prod_{e\in A}
  w_e
  \,.
\end{align*}
Projecting all weights $w_e$ onto a single variable $x$ yields the 
\emph{univariate forest polynomial}:
 \[
 	F(G;x) = \sum_{\substack{A\in\forests(G)}} x^{|A|} = \sum_{k=0}^{|E(G)|} a_k(G) x^k
 \,,
 \]
where $a_k(G)$ is the number of forests with $k$ edges in $G$.
For all $x\in\R\setminus\set{1}$, the formal relation between $T(G;x,1)$ and the 
univariate forest polynomial is given by the identity
 \begin{align} \label{eq:tutte_and_forest}
 T(G;x,1) = (x-1)^{|V|- k(E)} \sum_{A \in \forests(G)} (x-1)^{-|A|}
= (x-1)^{|V|- k(E)} \cdot F\paren[\Big]{G;\frac{1}{x-1}}
\,.
 \end{align}
The first equality follows from the fact that $A$ is a forest if and only if 
$k(A)+|A|-|V|=0$ holds.
As a result, evaluating the forest polynomial and evaluating the Tutte 
polynomial for $y=1$ are problems that are polynomial-time equivalent.
 
For a forest $A\in\forests(G)$, let $c(A)$ be the family of all sets $T\subseteq 
V(G)$ such that~$T\ne\emptyset$ and $T$ is a maximal connected component in~$A$;  
clearly, each such~$T$ is the vertex set of a tree in the forest, where we also 
allow trees with $\abs{T}=1$.
\begin{lemma}[Adding an apex]\label{lem: apex}
  Let $G$ be a weighted graph, and let $G'$ be obtained from~$G$ by adding a new 
  vertex~$a$ and joining it with each vertex~$v\in V(G)$ using an edge of 
  weight~$z_v$.
  Then
  \begin{align}\label{eq:adding an apex}
    F(G')
    =
    \sum_{A\in\forests(G)}
    \prod_{e\in A} w_e
    \cdot
    \prod_{T\in c(A)}
    \paren*{1 + \sum_{v\in T} z_v}
    \,.
  \end{align}

  Moreover, when we set $z_v=-1$ for all $v\in V(G)$ and $w_e=w$ for all $e\in 
  E(G)$, we have that the coefficient of $w^{n/2}$ in $F(G')$ is equal to the 
  number of perfect matchings in $G$.
\end{lemma}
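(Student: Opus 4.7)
The plan is to classify the forests of $G'$ by their restriction $A = A' \cap E(G)$ to the edges of $G$. For a fixed $A\in\forests(G)$, I enumerate the admissible subsets $B$ of apex edges such that $A\cup B$ remains acyclic in $G'$. The key observation is that $A\cup B$ contains a cycle if and only if $B$ contains two apex edges $\{a,u\},\{a,v\}$ with $u,v$ in the same connected component $T$ of $(V(G),A)$, since then an internal $u$-$v$-path in $T$ closes a cycle through $a$; conversely, apex edges landing in distinct components cannot close a cycle because $a$ is the only vertex they share. Writing $B$ as a disjoint union over the components of $A$ and summing $\prod_{e\in A}w_e$ against the independent per-component choices yields, for each $T\in c(A)$, the factor $1+\sum_{v\in T}z_v$, where the summand $1$ stands for not attaching any apex edge to $T$ and each summand $z_v$ for attaching the single apex edge $\{a,v\}$. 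Multiplying over $T\in c(A)$ and summing over $A\in\forests(G)$ produces \eqref{eq:adding an apex}. It is essential to include singleton trees in $c(A)$, i.e., the isolated vertices of $(V(G),A)$, so that every vertex of $G$ contributes exactly one factor and all apex edges are accounted for.

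For the perfect-matching consequence, I specialize $z_v=-1$ and $w_e=w$ in \eqref{eq:adding an apex}, turning the per-component factor into $1-|T|$, which vanishes whenever $|T|=1$. Thus only forests $A$ without isolated vertices contribute. Using $k(A)=|V(G)|-|A|$ for any forest and extracting the coefficient of $w^{n/2}$ restricts us to forests with $|A|=n/2$ and hence exactly $n/2$ components, which must each have size exactly $2$ because they cover $n$ vertices with each component of size at least $2$. Such $A$ are precisely the perfect matchings of $G$, and each contributes $(1-2)^{n/2}=(-1)^{n/2}$, so the coefficient of $w^{n/2}$ in $F(G')$ equals $(-1)^{n/2}$ times the number of perfect matchings of $G$; the resulting global sign is harmless and can be absorbed in the downstream reduction.

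The main step I expect to require care is the cycle-avoidance analysis that establishes the per-tree independence of apex-edge choices, since this is where the specific structure of $G'$ (an apex attached to every vertex) interacts with the forest property; once this factorization is in place, the multiplicative formula \eqref{eq:adding an apex} follows by direct expansion, and the perfect-matching consequence reduces to a short counting argument on forest component sizes.
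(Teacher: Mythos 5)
Your proof follows essentially the same route as the paper's: classify forests of $G'$ by their restriction $A = A' \cap E(G)$, show that the admissible apex-edge sets are exactly those attaching at most one apex edge to each tree of $A$ (this is the paper's property (P)), and factor the resulting sum per tree to obtain the stated product formula. Your cycle-avoidance analysis (a cycle in $A\cup B$ forces two apex edges into a common component of $A$, and conversely) is precisely the forward/backward argument the paper gives.

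One point where you are actually \emph{more} careful than the paper's own proof: when $A$ is a perfect matching, each tree has $\abs{T}=2$, so each factor is $1-\abs{T}=-1$ and the product over the $n/2$ trees is $(-1)^{n/2}$, not $1$ as the paper's proof asserts in its final step. Consequently $[w^{n/2}]F(G')$ equals $(-1)^{n/2}$ times the number of perfect matchings rather than the number itself; as you note, this sign is harmless for the downstream reduction (one can just multiply by $(-1)^{n/2}$), but as written the paper's ``moreover'' claim and its justification are off by this sign. Your derivation of the size constraint on components — $k(A)=n-\abs{A}=n/2$ components, none of size $1$ (else the factor vanishes), covering $n$ vertices, hence all of size exactly $2$ — is a clean equivalent of the paper's ``perfect matching or isolated vertex'' dichotomy. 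Overall: correct, same decomposition, with a sign fixed.
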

\begin{proof}
  In order to prove the claim, we first define a projection~$\phi$ that maps 
  forests $A'$ in the graph~$G'$ to forests $A=\phi(A')$ in the original 
  graph~$G$.
  In particular, $\phi$ simply removes all edges added in the construction 
  of~$G'$, that is, we define $\phi(A')=E(G)\cap A'$ for all 
  $A'\in\forests(G')$.
  Now $\phi(A')$ is a forest in $G$.

  Next we conveniently characterize the forests~$A'$ that map to the same~$A$ 
  under~$\phi$.
  Let $A$ be a fixed forest in~$G$.
  Then a forest~$A'$ in~$G'$ maps to $A$ under $\phi$ if and only if there is a 
  set $X$ with $X=A'\setminus A$ such that the following property holds:
  \begin{description}
    \item{(P)}
      For all trees $T\in c(A)$, at most one edge of~$X$ is incident to a vertex 
      of~$T$.
  \end{description}
  The forward direction of this claim follows from the fact that~$A'$ is a 
  forest, and so in addition to any tree~$T\in c(A)$ it can contain at most one 
  edge connecting~$T$ to~$a$; otherwise the tree and the two edges to~$a$ would 
  contain a cycle in~$A'$.
  For the backward direction of the claim, observe that adding a set $X$ with 
  the property (P) to~$A$ cannot introduce a cycle.

  Finally, we calculate the weight contribution of all~$A'$ that map to the 
  same~$A$.
  Let $A'$ be a forest in~$G$, let $A=\phi(A')$ and $X=A'\setminus A$.
  The weight contribution of~$A'$ in the definition of~$F(G')$ is $\prod_{e\in 
    A'} w'_e$.
  For all $e\in A$, we have $w'_e=w_e$.
  For all $e\in X$, we let $v_e\in V(G)$ be the vertex with $e=\set{a,v_e}$, and 
  we have $w'_e=z_{v_e}$.
  Thus the overall weight contribution of all $A'$ with $\phi(A)=A'$ is
  \begin{align}\label{eq: contribution for A}
    \sum_{\substack{A'\in\forests(G')\\\phi(A')=A}} \prod_{e\in A} w'_e
    =
    \prod_{e\in A} w_e \cdot \sum_{X} \prod_{e\in X} z_{v_e}
    =
    \prod_{e\in A} w_e \cdot \prod_{T\in c(A)} \paren*{1+\sum_{v\in T} z_{v}}
    \,.
  \end{align}
  The sum in the middle is over all~$X$ with the property (P), and the first 
  equality follows from the bijection between forests~$A'$ and sets~$X$ with 
  property~(P).
  For the second equality, we use property (P) and the distributive law.
  We obtain \eqref{eq:adding an apex} by taking the sum of equations~\eqref{eq: 
    contribution for A} over all~$A\in\forests(G)$.

  For the moreover part of the lemma, note that the stated settings of the edge 
  weights for~$G'$ yields
  \begin{align*}
    F(G')=\sum_{A\in\forests(G)} w^{\abs{A}} \prod_{T\in c(A)} 
    \paren*{1-\abs{T}}
    \,.
  \end{align*}
  The coefficient of $w^{n/2}$ in $F(G')$ satisfies
  \begin{align}\label{eq: extracted nhalf}
    [w^{n/2}] F(G')=\sum_{\substack{A\in\forests(G)\\\abs{A}=n/2}} \prod_{T\in 
      c(A)} \paren*{1-\abs{T}}
    \,.
  \end{align}
  Since $(V(G),A)$ is an acyclic graph with exactly~$n/2$ edges, it is either a 
  perfect matching or it contains an isolated vertex.
  If it contains an isolated vertex~$v$, then we have $\set{v}\in c(A)$ and thus 
  the product in~\eqref{eq: extracted nhalf} is equal to zero.
  It follows that~$A$ does not contribute to the sum if it is not a perfect 
  matching.
  On the other hand, if~$A$ is a perfect matching, then we have $\abs{T}=2$ for 
  all $T\in c(A)$, so the product in~\eqref{eq: extracted nhalf} is equal 
  to~$1$.
  Overall, we obtain that $[w^{n/2}] F(G')$ is equal to the number of perfect 
  matchings of~$G$.
\end{proof}

Lemma~\ref{lem: apex} shows that computing the multivariate forest polynomial is 
at least as hard as counting perfect matchings; moreover, this is true even if 
at most two different edge weights are used.
Next we argue how to reduce from the multivariate forest polynomial with at most 
two distinct weights to the problem of evaluating the univariate polynomial in 
multigraphs.
We do so via an oracle serf-reduction, whose queries are sparse multigraphs in 
which each edge has at most a constant number of parallel edges.
\begin{lemma}[From two weights to small weights using block interpolation]
\label{bivariate_to_const}
  Let $x$ and $y$ be two variables, and let $z \in \R\setminus\set{0}$ be fixed.
  There is an algorithm as follows:
  \begin{enumerate}
    \item Its input is a weighted graph~$(G,w)$ with $w_e\in\set{x,y}$ for all 
      $e\in E(G)$, and a real $\epsilon>0$.
    \item It outputs all coefficients of the bivariate polynomial $F(G;w)$.
    \item It runs in time $2^{\epsilon \abs{E(G)}} \cdot \poly(\abs{G})$.
    \item It has access to an oracle that computes $F(G;z\cdot w')$, 
          where $w'$ can be any weight function that assigns integer weights~$w'_e$
          satisfying $0\le w'_e \le C_\epsilon$ for some constant $C_\epsilon$
          that only depends on~$\epsilon$.
  \end{enumerate}
\end{lemma}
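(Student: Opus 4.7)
The plan is to apply Curticapean's block interpolation technique to the two-colour weighting. Let $E_x = \setc{e \in E(G)}{w_e = x}$ and $E_y = \setc{e \in E(G)}{w_e = y}$, and expand
\begin{align*}
F(G;w) \;=\; \sum_{i,j} a_{ij}\, x^{i} y^{j},
\end{align*}
where $a_{ij}$ is the number of forests $A \in \forests(G)$ with $\abs{A \cap E_x}=i$ and $\abs{A \cap E_y}=j$. The goal is to recover every $a_{ij}$ from oracle queries in which the raw weights lie in a bounded set of nonnegative integers.

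Choose a constant $k=k(\epsilon)\in\N$, to be fixed at the end, and set $b = \lceil\abs{E(G)}/k\rceil$. Partition $E_x$ into $b$ blocks $B_1^x, \dots, B_b^x$ and $E_y$ into $b$ blocks $B_1^y, \dots, B_b^y$, each containing at most $k$ edges. For every tuple $(u_1, \dots, u_b, v_1, \dots, v_b) \in \set{0,1,\dots,k}^{2b}$, let $w'$ assign weight $u_\ell$ to every edge of $B_\ell^x$ and weight $v_\ell$ to every edge of $B_\ell^y$; these are integer weights in $\set{0,\dots,k}$, so the oracle accepts the query with $C_\epsilon := k$. Grouping each forest $A\in\forests(G)$ according to its block profile $i_\ell = \abs{A\cap B_\ell^x}$, $j_\ell = \abs{A\cap B_\ell^y}$ yields
\begin{align*}
F(G;\,z\cdot w') \;=\; \sum_{\vec{i},\vec{j}} c_{\vec{i},\vec{j}}\; z^{\sum_\ell (i_\ell+j_\ell)} \prod_{\ell=1}^{b} u_\ell^{i_\ell}\, v_\ell^{j_\ell},
\end{align*}
where $c_{\vec{i},\vec{j}}$ is the number of forests with the prescribed block profile. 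As a polynomial in the $u_\ell$ and $v_\ell$, its individual degree in each variable is at most $k$.

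Because we query the oracle on the entire grid $\set{0,1,\dots,k}^{2b}$, a coordinate-by-coordinate univariate Vandermonde inversion reconstructs every coefficient $c_{\vec{i},\vec{j}}\cdot z^{\sum_\ell (i_\ell+j_\ell)}$; since $z\ne 0$ is known, we extract each $c_{\vec{i},\vec{j}}$, and then
\begin{align*}
a_{ij} \;=\; \sum_{\vec{i}:\,\sum_\ell i_\ell = i}\;\sum_{\vec{j}:\,\sum_\ell j_\ell = j} c_{\vec{i},\vec{j}}.
\end{align*}
The total number of oracle calls is $(k+1)^{2b} \le 2^{\,2\abs{E(G)}\,\log_2(k+1)/k}$, which drops below $2^{\epsilon\abs{E(G)}}$ as soon as $k$ is chosen so that $2\log_2(k+1)/k \le \epsilon$, a choice that depends only on $\epsilon$. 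The main obstacle is precisely this parameter balancing act: the block size $k$ simultaneously caps both the admissible query weights and the per-variable interpolation degree, while $b = \Theta(\abs{E(G)}/k)$ controls how many grid points must be visited; the ratio $\log_2(k+1)/k \to 0$ is exactly what makes both constraints satisfiable at once, and everything else is a routine multivariate Vandermonde computation.
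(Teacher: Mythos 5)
Your proof is correct and takes essentially the same approach as the paper: both introduce fresh block variables for groups of at most $C_\epsilon$ edges of the same original colour, exploit that the resulting polynomial has individual degree at most $C_\epsilon$ in each of the $O(m/C_\epsilon)$ block variables, interpolate on the grid of non-negative integer multiples of $z$, and then project the block variables back onto $x$ and $y$ by summing the recovered coefficients. The only point worth tightening is that the running-time bound must absorb the $\poly$-factor from solving the Vandermonde system, not just count the $(k+1)^{2b}$ oracle calls; but since $\poly\paren*{(k+1)^{2b}}=2^{O(m\log k/k)}$ still vanishes relative to $2^{\epsilon m}$ for $k$ large enough, this is the same parameter balancing the paper also glosses with ``when $C$ is chosen large enough depending on $\epsilon$''.
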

We remark that non-negative integer multiples of $z$, say~$z\cdot w'_e$,
can be thought of as $w'_e$ parallel edges of weight $z$ in a multigraph.
The quantity $F(G';z)$ for this multigraph $G'$ is then equal to the value 
$F(G;z\cdot w')$ of the weighted forest polynomial of~$G$ at $z\cdot w'$.
In particular, $F(G';1)$ is the number of forests in $G'$.
\begin{proof}
  Let $(G,w)$ with $w\in\set{x,y}^E$ and $\epsilon>0$ be given as input.
  We define $C_\epsilon\in\N$ as a large enough constant to be determined 
  later.
  The algorithm now assigns a new weight $z \cdot w'_e$ to each edge~$e$,
  where each $w'_e$ is chosen from the set of indeterminates
  $X \cup Y$ with $ X= \set{x_1,\ldots,x_{m/C}}$ and 
  $Y=\set{y_1,\ldots,y_{m/C}}$ in the following way:
  If $w_e = x$, choose $w'_e \in X$, and if $w_e = y$, choose $w'_e \in Y$.
  We further demand that the number of edges sharing the same weight
  is at most $C$ for each weight in $X \cup Y$.
  Among all such assignments~$z\cdot w'$, we pick an arbitrary one.
  We now consider the polynomial $F(G;z\cdot w')$.
  It has at most $2m/C$ variables and the maximum degree of each variable
  is at most~$C$, so $F(G;z \cdot w')$ has at most $(C+1)^{2m/C}$ monomials, where 
  $m=\abs{E(G)}$.
  The coefficients of this polynomial can be reconstructed when its values are 
  given for all evaluation points in the grid $(z\cdot[0,C])^{2m/C}$.

  Since each evaluation point only uses non-negative integer multiples of $z$
  between~$0$ and~$z\cdot C$, we can obtain the values at these evaluation 
  points by querying the oracle for $F(G;z\cdot w')$ that we are given.
  The number of evaluation points in the grid is equal to $\paren*{C+1}^{2m/C}$.
  The claim on the running time follows since the interpolation can be performed 
  in time $\poly\paren[\Big]{\paren*{C+1}^{2m/C}}$, which is at most $C\cdot 
  2^{\epsilon m}$ when~$C$ is chosen large enough depending on~$\epsilon$.

  In order to obtain the coefficient of $x^iy^j$ in $F(G;w)$, we compute the 
  image of $F(G;z\cdot w')$ under the projection that maps all variables in $X$ 
  to $x/z$ and all variables in $Y$ to $y/z$.  That is, we sum up the 
  coefficients of $F(G;z\cdot w')$ corresponding to the same monomial~$x^iy^j$,
  and divide by the factor $z^{i+j}$.
\end{proof}

The combination of Lemma~\ref{lem: apex} and Lemma~\ref{bivariate_to_const} 
shows, for all fixed $x \neq 0$, that it is hard to evaluate $F(G;x)$ for 
multigraphs with at most a constant number of parallel edges.
Next we apply a stretch to make the graphs simple.
To this end, we calculate the effect of a $k$-stretch on the univariate forest 
polynomial of a graph.

\begin{lemma}[The forest polynomial under a $k$-stretch]
\label{lem:kstretch}
  Let $G$ be a multigraph with $m$ edges, where every edge is weighted 
  with~$w\in\R$ and let $k$ be a positive integer such that the number $g_k(w)$ 
  with
  \[
    g_k(w)=\frac{w^k}{(w+1)^k - w^k}
  \]
  is well-defined.
  Let $G'$ be the simple graph obtained from $G$ by replacing every edge by a 
  path of $k$ edges.
  Then we have
  \begin{align*}
    F(G';w)
    =
    \paren[\big]{(w+1)^k - w^k}^m \cdot F\paren[\big]{G;g_k(w)}
    \,.
  \end{align*}
\end{lemma}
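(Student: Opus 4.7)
The idea is to classify forests of $G'$ according to which paths $P_e$ are fully included, and to show that this reduces to a sum over forests of $G$.

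First I would set up notation: for each edge $e$ of $G$, let $P_e$ denote the $k$-edge path in $G'$ that replaces $e$, whose $k-1$ internal vertices are new vertices of degree~$2$ in $G'$. Any subset $A' \subseteq E(G')$ decomposes uniquely as $A' = \bigcup_{e \in E(G)} T_e$ with $T_e \subseteq E(P_e)$. Define $S(A') = \{e \in E(G) : T_e = E(P_e)\}$.

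The key structural claim is that $A'$ is a forest in $G'$ if and only if $S(A')$ is a forest in $G$. For this, one direction is clear: if $S$ contains a cycle $e_1,\dots,e_\ell$ in $G$, then concatenating the corresponding paths $P_{e_1},\dots,P_{e_\ell}$ (all of whose edges lie in $A'$) yields a cycle in $G'$. For the converse, given a cycle $C' \subseteq A'$ in $G'$, consider any internal vertex $v$ of some $P_e$. Since $v$ has degree~$2$ in $G'$ (its only incident edges are the two edges of $P_e$ at $v$), the cycle $C'$ either uses both of them or neither. Propagating this along $P_e$, the cycle $C'$ uses either all $k$ edges of $P_e$ or none. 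Collecting the edges $e$ of the first type yields a cycle in $G$ all of whose edges lie in $S(A')$, so $S(A')$ is not a forest. This is the step requiring the most care, but it follows cleanly from the degree-$2$ observation.

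Granting the structural claim, the calculation is routine. I would write
\begin{align*}
  F(G';w)
  &= \sum_{S \in \forests(G)} w^{k|S|}
     \prod_{e \in E(G) \setminus S} \sum_{T \subsetneq E(P_e)} w^{|T|} \\
  &= \sum_{S \in \forests(G)} w^{k|S|}
     \bigl((w+1)^k - w^k\bigr)^{m-|S|},
\end{align*}
using that $\sum_{T \subseteq E(P_e)} w^{|T|} = (w+1)^k$ and that the single excluded term $T = E(P_e)$ contributes $w^k$. Pulling out the factor $\bigl((w+1)^k - w^k\bigr)^m$ and recognizing the remaining sum as $F(G;g_k(w))$ yields the claimed identity; well-definedness of $g_k(w)$ ensures that the division is legitimate. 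The main obstacle, as noted, is the rigorous justification of the structural claim about cycles, while the algebraic manipulation itself is straightforward.
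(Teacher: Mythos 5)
Your proof is correct and follows the paper's approach exactly: both classify forests of $G'$ by the set of $k$-paths that are fully included (your $S(A')$ is the paper's $\phi(A')$), observe that this gives a forest of $G$ and that the proper subsets of the remaining paths can be chosen independently, and then perform the same binomial-theorem computation. The only difference is one of exposition: your degree-two propagation argument spells out the converse of the structural claim, which the paper asserts without proof.
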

\begin{proof}
  We define a mapping $\phi$ that maps forests in $G'$ to forests in $G$ as 
  follows:
  We add an edge~$e\in E(G)$ to $A=\phi(A')$ if and only if $A'$ contains all 
  $k$ edges of~$G'$ that~$e$ got stretched into.
  That is, subgraphs~$A'$ that only differ by edges in ``incomplete paths'' are 
  mapped to the same multigraph~$A$ by $\phi$.

  Clearly, $\phi$ partitions $\forests(G')$ into sets of forests with the same 
  image under $\phi$.
  Let $A$ be a forest in $G$, and let us describe a way to generate all $A'$ 
  with $\phi(A')=A$.
  First, for each $e\in A$, we add its corresponding path in $G'$ of length~$k$ 
  to $A'$.
  Moreover, for each edge $e \in E(G)\setminus A$, we can add to~$A'$ any proper 
  subset of edges from the $k$-path in $G'$ that corresponds to $e$.
  Therefore, at each $e\in E(G)\setminus A$ independently, there are 
  $\binom{k}{i}$ ways to extend~$A'$ by $i$ edges to a forest in~$G'$.
  A forest $A'$ can be obtained in this fashion if and only if $\phi(A')=A$ 
  holds.

  For a fixed~$A$, let us consider all summands~$w^{\abs{A'}}$ in $F(G';w)$ with 
  $\phi(A')=A$.
  By the above considerations, the total weight contribution of these summands 
  is $w^{k\cdot\abs{A}}\cdot\paren[\big]{\sum_{i=0}^{k-1} 
    \binom{k}{i}w^i}^{m-\abs{A}}$,
  which equals
  $w^{k\cdot\abs{A}}\cdot\paren[\big]{\paren{w+1}^k-w^k}^{m-\abs{A}}$ by the 
  binomial theorem.
  These remarks justify the following calculation for the forest polynomial:
  \begin{align*}
    F(G';w)
    &=
    \sum_{A\in\forests(G)}
    \sum_{\substack{A'\in\forests(G')\\\phi(A')=A}}
    w^{\abs{A'}}
    =
    \sum_{A\in\forests(G)}
    w^{k\cdot\abs{A}}
    \cdot
    \paren[\Big]{(w+1)^k - w^k}^{m-\abs{A}}\\
    &=
    \paren[\Big]{(w+1)^k - w^k}^{m}
    \cdot
    \sum_{A\in\forests(G)}
    \paren*{\frac{w^k}{(w+1)^k - w^k}}^{\abs{A}}
    \,.
  \end{align*}
  Since the sum in the last line is equal to $F\paren[\big]{G;g_k(w)}$, this 
  concludes the proof.
\end{proof}

We are now in position to formally prove the main theorem of this section.
\begin{proof}[of Theorem~\ref{thm:forest counting line}]
Let $x\in\R\setminus\set{1}$.
Suppose that, for all~$\epsilon>0$, there exists an algorithm~$B$ to compute the 
mapping $G \mapsto T(G;x,1)$ in time $2^{\epsilon n}$ for given simple 
graphs~$G$ with at most $C'_\epsilon n$ edges, where $C'_\epsilon$ will be 
chosen later.
By~\eqref{eq:tutte_and_forest}, algorithm~$B$ can be used to compute 
values~$F(G;\frac{1}{x-1})$ with no relevant overhead in the running time, so 
let $t=\frac{1}{x-1}$.
Given such an algorithm (or family of algorithms), we devise a similar algorithm 
for counting perfect matchings, which together with Theorem~\ref{thm:curt} 
implies that \#ETH is false.

Let $G$ be a simple $n$-vertex graph with at most $C \cdot n$ edges.
Let $G'$ be the graph obtained from $G$ as in Lemma~\ref{lem: apex} by adding an 
apex, labeling the edges incident to the apex with the indeterminate $z$,
and all other edges with the indeterminate $w$.
By Lemma~\ref{lem: apex}, the coefficients of the corresponding bivariate forest 
polynomial of~$G'$ are sufficient to extract the number of perfect matchings 
of~$G$, so it remains to compute these coefficients.

To obtain the coefficients, we use Lemma~\ref{bivariate_to_const}.
The reduction guaranteed by the lemma produces $2^{\epsilon m}$ multigraphs~$H$, 
all with the same vertex set $V(G')$.
Moreover, each~$H$ has at most $C_\epsilon \abs{E(G')} = 
C_\epsilon(\abs{E(G)}+n) \leq O(C_\epsilon n)$ edges, and the multiplicity of 
each edge is at most $C_\epsilon$.
Finally, each edge of each~$H$ is assigned the same weight~$z$, which we will 
choose later.

The reduction makes one query for each~$H$, where it asks for the 
value~$F(H;z)$.
Our assumed algorithm however only works for simple graphs, so we perform 
a~$3$-stretch to obtain a simple graph $H'$ with at most
$3\abs{E(H)} \leq O(C_\epsilon n)$ edges.
Lemma~\ref{lem:kstretch} allows us to efficiently compute the value~$F(H;z)$ 
when we are given the value~$F(H';t)$ and $z=g_3(t)$ holds.
Since $g_k$ is a total function whenever~$k$ is a positive odd integer, and $3$ 
is indeed odd, the value $g_3(t)$ is well-defined, and we set $z=g_3(t)$.

We set $C'_\epsilon$ large enough so that $E(H')\le C'_\epsilon\cdot n$ holds.
Trickling back the reduction chain, we can use algorithm~$B$ to compute 
$T(H';x,1)$ in time $2^{\epsilon n}$ any $\epsilon>0$.
Using~\eqref{eq:tutte_and_forest}, we get the value of $F(H';t)$ since $x\neq 
1$.
This, in turn, yields the value of~$F(H;z)$ since $(z+1)^k-z^k\ne 0$ and 
$g_3(t)=z$.
We do this for each of the $2^{\epsilon m}$ queries~$H$ that the reduction in 
Lemma~\ref{bivariate_to_const} makes.
Finally, the latter reduction outputs the coefficients of the bivariate forest 
polynomial of~$G'$, which contains the information on the number of perfect 
matchings of~$G$.

To conclude, assuming the existence of the algorithm family~$B$, we are able to 
count perfect matching in time $\poly(2^{\epsilon m})$ for all~$\epsilon>0$, 
which implies via Theorem~\ref{thm:curt} that \#ETH is false.
\end{proof}
Note that the construction from the proof of Theorem \ref{thm:forest counting} 
implies hardness of $T(G;x,1)$ for tripartite $G$,
and also in the bipartite case whenever $x \neq -1$.


\section{Counting solutions to Boolean CSPs under \#ETH}

In this section, we prove that the \cc{\#P}-hard cases of
the dichotomy theorem for Boolean CSPs by Creignou and 
Hermann~\cite{creignou1996complexity} are also hard under \cETH.
The main difficulty is to establish \cETH-hardness of counting independent sets 
in bipartite graphs.
We do so first, and afterwards observe that all other reductions 
in~\cite{creignou1996complexity} can be used without modification.

\subsection{Counting Independent Sets in Bipartite Graphs is \#ETH-hard}

We prove that the problem of counting independent sets in bipartite graphs 
admits no subexponential algorithm under \cETH, even for sparse and simple 
graphs.
\begin{proof}[of Theorem~\ref{thm:bishard}]
  We reduce from the problem of counting independent sets in graphs of bounded 
  degree; by Theorem~\ref{thm:curt}, this problem does not have a 
  subexponential-time algorithm.
  First we note that a set is an independent set if and only its complement is a 
  vertex cover.
  Hence their numbers are equal.
  We devise a subexponential-time oracle reduction family to reduce counting 
  vertex covers in general to counting them in bipartite graphs.

  Given a graph $G$ with~$n$ vertices and~$m$ edges, and a running time 
  parameter $d\in\N$, the reduction works as follows.
  We partition the edges into $\frac{|E|}{d}$ blocks of size at most~$d$ each.
  We denote the blocks by $B_1,\dots,B_{\frac{m}{d}}$.  Next, for each 
  $\vec{\ell}=(\ell_1,\dots,\ell_{\frac{m}{d}})\in\N^{m/d}$, we construct the 
  graph $G_{\vec{\ell}}$ by replacing each edge $e\in B_i$ with a copy of the 
  gadget $H_{\ell_i}$ shown in Figure~\ref{fig:provan}.
  Note that $G_{\vec{\ell}}$ is bipartite.

	\begin{figure}
		\centering
		\begin{tikzpicture}[-, scale = 0.7]
		\node[draw,circle](1) at (0,0) {$u$};
		\node[draw,circle](2) at (1.5,1.5) {$~~$};
		\node[draw,circle](3) at (1.5,-1.5) {$~~$};
		\node[draw,circle](4) at (3,1.5) {$~~$};
		\node[draw,circle](5) at (3,-1.5) {$~~$};
		\node[draw,circle](6) at (4.5,1.5) {$~~$};
		\node[draw,circle](7) at (4.5,-1.5) {$~~$};
		\node[draw,circle](8) at (6,0) {$v$};
		
		\draw(1) -- (2);
		\draw(1) -- (3);
		\draw(2) -- (4);
		\draw(3) -- (5);
		\draw(4) -- (6);
		\draw(5) -- (7);
		\draw(6) -- (8);
		\draw(7) -- (8);
		
	
    \node(10) at (4.25,0) {$\ell$ copies};
		
		\node[circle,inner sep=1pt,fill](42) at (3,0.75) {};
		\node[circle,inner sep=1pt,fill](43) at (3,0) {};
		\node[circle,inner sep=1pt,fill](44) at (3,-0.75) {};
		\end{tikzpicture}
    \caption{\label{fig:provan}%
      The gadget $H_\ell$ of Provan and Ball~\cite{provan1983complexity} as used 
      in the proof of Theorem~\ref{thm:bishard}.
      It corresponds to an $\ell$-fattening of the edge~$\set{u,v}$, followed by 
      a $4$-stretch of each of the $\ell$ parallel edges.
    }
  \end{figure}
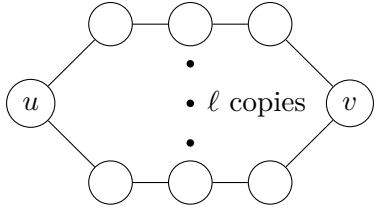
  \begin{observation}[Provan and Ball]
    \label{obs:povball}%
    The number of vertex covers of $H_{\ell}$ containing neither $u$ nor $v$ is 
    $2^{\ell}$, the number of vertex covers containing a particular one of $u$ 
    or $v$ is $3^{\ell}$, and the number of vertex covers containing both $u$ 
    and $v$ is $5^{\ell}$.
	\end{observation}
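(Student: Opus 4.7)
The plan is to exploit the fact that the $\ell$ parallel stretched paths in $H_\ell$ share only the two endpoints $u$ and $v$. Once the status of $u$ and $v$ (in the cover or not) is fixed, the requirement of covering every edge of $H_\ell$ decomposes into $\ell$ independent local requirements, one per path. Consequently the count for $H_\ell$ is the $\ell$-th power of the count on a single length-four path $u\text{-}a\text{-}b\text{-}c\text{-}v$ with the same boundary specification. So the entire claim reduces to verifying that a single such path contributes the factors $2$, $3$, $3$, and $5$ in the four boundary cases.

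First I would formalize this factorization: write $H_\ell$ as the union of $\ell$ paths $P^{(1)},\dots,P^{(\ell)}$ intersecting pairwise exactly in $\{u,v\}$, so that for any $S\subseteq V(H_\ell)$ with prescribed intersection with $\{u,v\}$, $S$ is a vertex cover of $H_\ell$ iff $S\cap V(P^{(i)})$ covers the edges of $P^{(i)}$ for each $i$. This gives the product formula $N_{uv}(H_\ell) = N_{uv}(P)^\ell$, where $N_{uv}$ denotes the number of extensions to a cover consistent with the chosen status of $u,v$, and $P$ is a single length-four path.

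Next I would carry out the case analysis on the single path $u\text{-}a\text{-}b\text{-}c\text{-}v$, which has the four edges $ua,ab,bc,cv$. In the case $u,v\notin S$, both $a$ and $c$ are forced in, while $b$ is free, giving $2$ extensions. In each of the two mixed cases (by the symmetry $u\leftrightarrow v$, $a\leftrightarrow c$), exactly one endpoint of the path forces only the middle neighbour on the opposite side to be in the cover, and then the remaining edge $\{a,b\}$ (or $\{b,c\}$) can be covered in $3$ ways, giving $3$ extensions each. For $u,v\in S$, both end-edges are already covered, and a direct enumeration of $(a,b,c)\in\{0,1\}^3$ against the two middle edges $\{a,b\},\{b,c\}$ rules out $(0,0,0),(0,0,1),(1,0,0)$, leaving $5$ valid extensions.

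There is no real obstacle here beyond bookkeeping: once the shared-endpoint factorization is written out carefully, the observation is a finite case check on a $4$-edge path. The only subtlety to mention is that the stretch length $4$ is what makes all three counts $2,3,5$ distinct and multiplicatively independent, which is why the $4$-stretch (rather than, say, a $2$- or $3$-stretch) is used; this independence is what the subsequent interpolation step in the proof of Theorem~\ref{thm:bishard} exploits, but is not needed for the observation itself.
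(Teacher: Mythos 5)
Your proof is correct. The paper states Observation~\ref{obs:povball} without proof (citing Provan and Ball), and your argument---factor over the $\ell$ paths, which meet only in $\{u,v\}$, then do the finite case check on a single $4$-edge path $u\text{-}a\text{-}b\text{-}c\text{-}v$ yielding $2$, $3$, $3$, $5$---is exactly the intended justification and is carried out correctly in all four boundary cases.
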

	
  We follow the proof of Provan and Ball, but do so in a block-wise fashion.
  To this end,
  let $T$ be the set of all $(m/d)\times 3$ matrices with entries from 
  $\set{0,\dots,d}$.
  The \emph{type} of a set $S\subseteq V(G)$ is the matrix $t\in T$ such that, 
  for all $i = 1 \dots \frac{m}{d}$,
  \begin{enumerate}
    \item
      $t_{i1}$ is equal to the number of edges $e\in B_i$ with $\abs{e\cap 
        S}=0$,
    \item
      $t_{i2}$ is equal to the number of edges $e\in B_i$ with $\abs{e\cap 
        S}=1$, and
    \item
      $t_{i3}$ is equal to the number of edges $e\in B_i$ with $\abs{e\cap 
        S}=2$.
  \end{enumerate}
  Every set $S\subseteq V(G)$ has exactly one type.
  Let $x_{t}$ be the number of all sets $S\subseteq V(G)$ that have type~$t$.

  We classify vertex covers $C\subseteq V(G_{\vec\ell})$ of $G_{\vec\ell}$ by 
  their intersection with $V(G)$, so let $S=C\cap V(G)$ and let $t$ be the type 
  of~$S$.
  By Observation~\ref{obs:povball} and the fact that all inserted gadgets act 
  independently after conditioning on the intersection of the vertex covers 
  of~$G_{\vec\ell}$ with~$V(G)$, there are exactly~$\prod_{i=1}^{m/\ell} 
  \paren*{2^{t_{i1}} 3^{t_{i2}} 5^{t_{i3}}}^{\ell_i}$ vertex covers~$C'$ whose 
  intersection with $V(G)$ is~$S$.
  Moreover, the number of sets~$S$ of type~$t$ is equal to~$x_t$.
  Hence the number $N_{\vec\ell}$ of vertex covers of $G_{\vec{\ell}}$ satisfies
  \begin{align}\label{eq:N235}
    N_{\vec{\ell}} = \sum_{t\in T} x_{t} \cdot \prod_{i=1}^{m/d}
    \paren*{2^{t_{i1}} 3^{t_{i2}} 5^{t_{i3}}}^{\ell_i}
  \end{align}

  Since $G_{\vec\ell}$ is bipartite, our reduction can query the oracle to 
  obtain the numbers~$N_{\vec{\ell}}$ for all $\vec{\ell} \in 
  [(d+1)^3]^{\frac{n}{d}}$.
  This yields a system of linear equations of type~\eqref{eq:N235}, where 
  the~$x_t$ for $t\in T$ are the unknowns; note that we have exactly $\abs{T}$ 
  equations and unknowns.
  Let $M$ be the corresponding $\abs{T}\times\abs{T}$ matrix, so that the system 
  can be written as $N=M\cdot x$.

  It remains to prove that $M$ is invertible.
  For this, we observe that $M$ can be decomposed into a tensor product of 
  smaller matrices as follows.
  Let $A$ be the $(d+1)^3 \times (d+1)^3$ where the row indices~$\ell$ are from 
  $[(d+1)^3]$, the column indices~$\tau$ are from~$\set{0,\dots,d}^3$, and the 
  entries are defined via
  $A_{\ell \tau} = \paren*{2^{\tau_1} 3^{\tau_2} 5^{\tau_3}}^{\ell}$.
  Provan and Ball, as well as the reader, observe that~$A$ is the transpose of a 
  Vandermonde matrix.
  Due to the uniqueness of the prime factorization, the evaluation 
  points~$2^{\tau_1} 3^{\tau_2} 5^{\tau_3}$ are distinct for distinct~$\tau$, 
  and thus $\det(A)\ne 0$.
  Furthermore, we observe that $M=A^{\otimes\frac{n}{d}}$ holds, which implies 
  $\det(M)\ne 0$ and that $M$ is invertible.

  Since $M$ is invertible, we can solve the equation system $N=M\cdot x$ in time 
  polynomial in its size, and compute $x_t$ for all $t \in T$.  Finally, we 
  compute the sum of $x_t$ over all matrices~$t$ whose first column contains 
  only zeros.
  This yields the number of all sets~$S\subseteq V(G)$ that intersect every edge 
  of~$G$ at least once, that is, the number of vertex covers of~$G$ which equals, as mentioned above, the number of independent sets of $G$.

  Assume that \#ETH holds, and let $\epsilon,D>0$ be the constants from 
  Theorem~\ref{thm:curt}, which are such that no algorithm can count independent 
  sets in general graphs of maximum degree~$D$ in time~$2^{\epsilon n}$.
  We apply our reduction to such a graph; it makes at most $2^{O(\log d \cdot 
    m/d)}$ queries to the oracle.
  Since $m\le Dn$ holds, and the running time for solving the linear equation 
  system is polynomial in the number of queries, we can choose $d\in\N$ to be a 
  large enough constant depending on~$\epsilon>0$ to achieve an overall running 
  time of~$O(2^{\frac{1}{2}\epsilon n})$ for the reduction.
  Also note that the queries to the oracle for bipartite graphs have degree at 
  most~$(d+1)^3 \cdot D$, which is a constant that only depends on~$\epsilon$.
  If there was an algorithm for counting independent sets in bipartite graphs 
  that ran in time~$O(2^{\frac{1}{2}\epsilon n})$, we would get a combined 
  algorithm for counting independent sets in general graphs that would be faster than the choice of $\epsilon$ and $D$ would allow.
  Hence, under \#ETH, there are constants $\epsilon',D'>0$ such that no 
  $O(2^{\epsilon' n})$-time algorithm can count all independent sets on graphs 
  of maximum degree at most~$D'$.
\end{proof}

We defer the simple observations needed to prove Corollary~\ref{thm:candh} to 
the appendix.

\section*{Acknowledgments}
We would like to thank Radu Curticapean for various fruitful discussions and 
interactions, and also Leslie Ann Goldberg, Miki Hermann, Mark Jerrum, John 
Lapinskas, David Richerby, and all other participants of the ``dichotomies'' 
work group at the Simons Institute in the spring of 2016.
Moreover, we would like to thank Tyson Williams who in 2013 pointed the second 
author to the self-contained \#P-hardness proof for forest counting appearing 
in~\cite{ginestet2008combinatorics}.

\bibliography{ETH-counting}

\begin{thebibliography}{10}

\bibitem{bjorklund2012counting}
Andreas Bj{\"o}rklund.
\newblock Counting perfect matchings as fast as ryser.
\newblock In {\em Proceedings of the 23rd Symposium on Discrete Algorithms,
  SODA 2012}, pages 914--921, 2012.
\newblock \href {http://dx.doi.org/10.1137/1.9781611973099.73}
  {\path{doi:10.1137/1.9781611973099.73}}.

\bibitem{BHKK08}
Andreas Bj\"orklund, Thore Husfeldt, Petteri Kaski, and Mikko Koivisto.
\newblock Computing the {Tutte} polynomial in vertex-exponential time.
\newblock In {\em Proceedings of the 47th annual IEEE Symposium on Foundations
  of Computer Science, FOCS 2008}, pages 677--686, 2008.
\newblock \href {http://dx.doi.org/10.1109/FOCS.2008.40}
  {\path{doi:10.1109/FOCS.2008.40}}.

\bibitem{cai2012complexity}
Jin-Yi Cai and Xi~Chen.
\newblock Complexity of counting {CSP} with complex weights.
\newblock In {\em Proceedings of the 44th Symposium on Theory of Computing,
  STOC 2012}, pages 909--920, 2012.
\newblock \href {http://dx.doi.org/10.1137/100814585}
  {\path{doi:10.1137/100814585}}.

\bibitem{cai2011computational}
Jin-Yi Cai, Pinyan Lu, and Mingji Xia.
\newblock Computational complexity of holant problems.
\newblock {\em SIAM Journal on Computing}, 40(4):1101--1132, 2011.
\newblock \href {http://dx.doi.org/10.1137/100814585}
  {\path{doi:10.1137/100814585}}.

\bibitem{CalabroIP06}
Chris Calabro, Russell Impagliazzo, and Ramamohan Paturi.
\newblock A duality between clause width and clause density for sat.
\newblock In {\em Proceedings of the 21st Annual IEEE Conference on
  Computational Complexity}, CCC '06, pages 252--260, Washington, DC, USA,
  2006. IEEE Computer Society.
\newblock URL: \url{http://dx.doi.org/10.1109/CCC.2006.6}, \href
  {http://dx.doi.org/10.1109/CCC.2006.6} {\path{doi:10.1109/CCC.2006.6}}.

\bibitem{creignou1996complexity}
Nadia Creignou and Miki Hermann.
\newblock Complexity of generalized satisfiability counting problems.
\newblock {\em Information and Computation}, 125(1):1--12, 1996.
\newblock \href {http://dx.doi.org/10.1006/inco.1996.0016}
  {\path{doi:10.1006/inco.1996.0016}}.

\bibitem{curticapean2015block}
Radu Curticapean.
\newblock Block interpolation: A framework for tight exponential-time counting
  complexity.
\newblock In {\em Proceedings of the 42nd International Colloquium on Automata,
  Languages and Programming, ICALP 2015}, pages 380--392. Springer, 2015.
\newblock \href {http://dx.doi.org/10.1007/978-3-662-47672-7_31}
  {\path{doi:10.1007/978-3-662-47672-7_31}}.

\bibitem{DHMTW14}
Holger Dell, Thore Husfeldt, D\'aniel Marx, Nina Taslaman, and Martin Wahl\'en.
\newblock Exponential time complexity of the permanent and the {Tutte}
  polynomial.
\newblock {\em ACM Transactions on Algorithms}, 10, 2014.
\newblock \href {http://dx.doi.org/10.1145/2635812}
  {\path{doi:10.1145/2635812}}.

\bibitem{DHW10}
Holger Dell, Thore Husfeldt, and Martin Wahl\'en.
\newblock Exponential time complexity of the permanent and the {Tutte}
  polynomial.
\newblock In {\em Proceedings of the 37th International Colloquium on Automata,
  Languages and Programming, ICALP 2010}, pages 426--437, 2010.
\newblock \href {http://dx.doi.org/10.1007/978-3-642-14165-2_37}
  {\path{doi:10.1007/978-3-642-14165-2_37}}.

\bibitem{CRChandbookTutte}
Joanna Ellis-Monaghan and Iain Moffatt.
\newblock {\em CRC handbook on the Tutte polynomial and related topics}.
\newblock CRC press, in preparation.

\bibitem{ginestet2008combinatorics}
Geoffrey Grimmett and Colin McDiarmid, editors.
\newblock {\em Combinatorics, Complexity, and Chance: A Tribute to {Dominic
  Welsh}}.
\newblock Oxford Lecture Series in Mathematics and Its Applications (Book 34).
  Oxford University Press, 2007.

\bibitem{HusfeldtTaslaman}
Thore Husfeldt and Nina Taslaman.
\newblock The exponential time complexity of computing the probability that a
  graph is connected.
\newblock In {\em Proceedings of the 5th International Symposium on
  Parameterized and Exact Computation, IPEC 2010}, pages 192--203, 2010.
\newblock \href {http://dx.doi.org/10.1007/978-3-642-17493-3_19}
  {\path{doi:10.1007/978-3-642-17493-3_19}}.

\bibitem{IP01}
Russell Impagliazzo and Ramamohan Paturi.
\newblock On the complexity of $k$-{SAT}.
\newblock {\em Journal of Computer and System Sciences}, 62(2):367--375, 2001.
\newblock \href {http://dx.doi.org/10.1006/jcss.2000.1727}
  {\path{doi:10.1006/jcss.2000.1727}}.

\bibitem{ImpagliazzoPZ01}
Russell Impagliazzo, Ramamohan Paturi, and Francis Zane.
\newblock Which problems have strongly exponential complexity?
\newblock {\em Journal of Computer and System Sciences}, 63(4):512--530, 2001.
\newblock \href {http://dx.doi.org/10.1006/jcss.2001.1774}
  {\path{doi:10.1006/jcss.2001.1774}}.

\bibitem{JVW90}
Fran\c{c}ois Jaeger, Dirk~L. Vertigan, and Dominic~J.A. Welsh.
\newblock On the computational complexity of the {Jones} and {Tutte}
  polynomials.
\newblock {\em Mathematical proceedings of the Cambridge Philosophical
  Society}, 108(1):35--53, 1990.
\newblock \href {http://dx.doi.org/10.1017/S0305004100068936}
  {\path{doi:10.1017/S0305004100068936}}.

\bibitem{provan1983complexity}
J.~Scott Provan and Michael~O Ball.
\newblock The complexity of counting cuts and of computing the probability that
  a graph is connected.
\newblock {\em SIAM Journal on Computing}, 12(4):777--788, 1983.
\newblock \href {http://dx.doi.org/10.1137/0212053}
  {\path{doi:10.1137/0212053}}.

\bibitem{Sokal2005}
Alan~D. Sokal.
\newblock The multivariate {Tutte} polynomial (alias {Potts} model) for graphs
  and matroids.
\newblock In {\em Surveys in Combinatorics}, volume 327 of {\em London
  Mathematical Society Lecture Note Series}, pages 173--226, 2005.

\bibitem{toda89}
Seinosuke Toda.
\newblock {PP} is as hard as the polynomial-time hierarchy.
\newblock {\em SIAM Journal on Computing}, 20(5):865--877, 1991.
\newblock \href {http://dx.doi.org/10.1137/0220053}
  {\path{doi:10.1137/0220053}}.

\bibitem{V79}
Leslie~G. Valiant.
\newblock The complexity of computing the permanent.
\newblock {\em Theoretical Computer Science}, 8(2):189--201, 1979.
\newblock \href {http://dx.doi.org/10.1016/0304-3975(79)90044-6}
  {\path{doi:10.1016/0304-3975(79)90044-6}}.

\bibitem{vertigan1991computational}
Dirk~Llewellyn Vertigan.
\newblock {\em On the computational complexity of {Tutte}, {Jones}, {Homfly}
  and {Kauffman} invariants.}
\newblock PhD thesis, University of Oxford, 1991.

\end{thebibliography}

\appendix
\section{The Boolean CSP dichotomy}

Instances of the constraint satisfaction problem $\#\mathrm{CSP}(\Gamma)$ are 
conjunctions of relations in $\Gamma$ applied to variables over the Boolean 
domain and the goal is to compute the number of satisfying assignments. A 
satisfying assignment is an assignment to the variables such that the formula 
evaluates to true, that is, every relation in the conjunction evaluates to true.  
A more detailed description of the problem can be found in the paper of Creignou 
and Hermann \cite{creignou1996complexity}.

Creignou and Hermann prove Theorem~\ref{thm:candh} by reducing either from 
$\psat$, the problem of counting satisfying assignments of a $2$-CNF where every 
literal is positive, or from $\isat$, the problem of counting satisfying 
assignments of a $2$-CNF where every clause contains exactly one positive and 
one negative literal. A straightforward analysis of the construction reveals 
that the reductions only lead to a linear overhead. More precisely:

\begin{observation}
	\label{obs:candh}
  Given an instance of $\psat$ or $\isat$ with $n$ variables and a set $\Gamma$ 
  of logical relations such that at least one of the relations is not affine, 
  the construction of Creignou and Hermann results in an instance of 
  $\#\mathrm{CSP}(\Gamma)$ of size $c\cdot n$ where $c$ only depends on the size 
  of the largest non-affine relation in $\Gamma$.
\end{observation}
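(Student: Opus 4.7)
The plan is to combine the linear-size \cETH-hardness of $\psat$ and $\isat$ with a careful size-accounting of the Creignou--Hermann reduction. I would proceed in three steps.

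First, I would derive linear-size \cETH-hardness for $\psat$ and $\isat$ from Theorem~\ref{thm:bishard}. Satisfying assignments of a positive $2$-CNF $\varphi$ are in bijection with vertex covers of the graph whose edge set is exactly the set of clauses of $\varphi$, and vertex covers are complements of independent sets. Since Theorem~\ref{thm:bishard} gives hardness even for bipartite graphs of bounded maximum degree $D$, the resulting $\psat$ instance has $n$ variables and at most $Dn/2$ clauses, i.e., total encoding size $O(n)$. An analogous translation (over the same bipartite graphs, with clauses encoding implications along each edge) applies to $\isat$.

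Second, I would go through the Creignou--Hermann reduction and check that each clause of the input $\psat$ or $\isat$ instance is replaced by a gadget of constant size. The heart of their construction takes a single non-affine relation $R \in \Gamma$ of arity $k$ and realizes either $x \vee y$ or $\neg x \vee y$ by substituting some of the $k$ coordinates of $R$ by Boolean constants and existentially projecting out the others. Each instantiation of this gadget per input clause introduces at most $k$ fresh variables and one copy of $R$, independently of $n$. Realizing the constants $0$ and $1$ themselves is done once, globally, at a cost that again depends only on the structure of $R$.

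Third, composing these two observations, an $n$-variable $\psat$ or $\isat$ instance of size $O(n)$ is mapped to a $\#\mathrm{CSP}(\Gamma)$ instance of size $c \cdot n$, where $c$ is a finite function of the maximum arity $k$ among non-affine relations in $\Gamma$. The main obstacle is bookkeeping rather than mathematics: one has to verify that \emph{every} intermediate step in the Creignou--Hermann dichotomy proof --- including their case analysis on the structure of $R$ and any constant-simulation tricks used to obtain $\{(0)\}$ and $\{(1)\}$ as pp-definable relations --- is itself of constant size. Because their proof uses only finitely many distinct gadget constructions, each specified by a finite primitive positive formula over $R$, the total blow-up is constant in $n$ and controlled by $k$ alone, which is precisely the statement of the observation.
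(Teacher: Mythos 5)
Your proposal takes essentially the same route the paper intends: the paper gives no detailed proof of this observation, merely asserting that ``a straightforward analysis of the construction reveals that the reductions only lead to a linear overhead,'' and your step-by-step accounting of the Creignou--Hermann construction is exactly the kind of analysis they have in mind. Two remarks. First, your step~1 (deriving that the $\psat$ and $\isat$ instances arising from the hardness chain have only $O(n)$ clauses) is not literally part of the observation as stated, which is purely a size bound on the Creignou--Hermann map; in the paper that sparsity is supplied separately by Theorem~\ref{thm:curt} and Lemma~\ref{lem:imphard} before the observation is invoked. Nevertheless, your step is a genuine and useful clarification: the observation as written only makes sense if the input $\psat$/$\isat$ instance is itself of size $O(n)$, since an $n$-variable instance with $\Theta(n^2)$ clauses would necessarily produce a $\Theta(n^2)$-size \#CSP instance. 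Second, your description of the gadget as introducing ``at most $k$ fresh variables and one copy of $R$'' per clause is a slight oversimplification --- the primitive-positive implementation of $\mathrm{OR}_2$ or the implication relation from $R$ may use several copies of $R$, and in the counting setting one must use faithful (parsimonious) implementations rather than bare existential projection --- but what matters for the observation is only that the gadget size is a constant determined by the largest non-affine relation in $\Gamma$, which remains true. So the proposal is correct in substance and matches the paper's intent.
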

Therefore it suffices to establish that neither $\psat$ nor $\isat$ have a 
$2^{o(n)}$-time algorithm.
Since $\psat$ is identical to counting vertex covers in (general) graphs, 
Theorem~\ref{thm:curt} applies here.
The \cETH-hardness of $\isat$ follows by a known reduction from counting 
independent sets in bipartite graphs, which we include here for completeness.
\begin{lemma}
	\label{lem:imphard}
  Assuming \cETH, there is no algorithm that solves $\isat$ in time $2^{o(n)}$ 
  where $n$ is the number of variables.
\end{lemma}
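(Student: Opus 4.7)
The plan is to give a gadget-free, variable-preserving reduction from counting independent sets in bipartite graphs to $\isat$, and then appeal to Theorem~\ref{thm:bishard}. Let $G=(U\dotcup V, E)$ be a bipartite graph with $n = |U|+|V|$ vertices. I will set up a 2-CNF formula $\varphi_G$ with exactly $n$ variables by using \emph{different} semantics on the two sides of the bipartition: for each $u \in U$ introduce a variable $x_u$ interpreted as ``$u$ is in the independent set'', and for each $v \in V$ introduce a variable $y_v$ interpreted as ``$v$ is \emph{not} in the independent set''. For each edge $\{u,v\} \in E$ (with $u \in U$, $v \in V$) add the clause $\neg x_u \vee y_v$.

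The correctness check is short: the clause $\neg x_u \vee y_v$ is violated precisely when $x_u = 1$ and $y_v = 0$, i.e.\ when $u$ is chosen and $v$ is chosen, which is exactly the condition that the edge $\{u,v\}$ is monochromatically selected. Hence satisfying assignments of $\varphi_G$ correspond bijectively to independent sets in~$G$ via the map $I \mapsto (x_u = \mathbf{1}[u \in I],\; y_v = \mathbf{1}[v \notin I])$. Moreover each clause contains exactly one positive literal ($y_v$) and one negative literal ($\neg x_u$), so $\varphi_G$ is a legal $\isat$ instance.

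The reduction is trivially polynomial and, crucially, the number of variables of $\varphi_G$ equals $n$, the number of vertices of $G$. Hence an $\exp(o(n))$-time algorithm for $\isat$ would yield an $\exp(o(n))$-time algorithm for counting independent sets in bipartite graphs, even of bounded degree, contradicting Theorem~\ref{thm:bishard} under $\cETH$. There is no real obstacle here; the only subtlety is choosing the ``flipped'' encoding on one side of the bipartition so that every clause is an implication rather than two-negative or two-positive, and it is this dual semantics that makes a bipartite structure exactly match the $\isat$ syntactic restriction.
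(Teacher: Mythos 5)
Your proof is correct and is essentially the same reduction as the paper's: both translate each edge $\{u,v\}$ of the bipartite graph into a single implication clause and observe that satisfying assignments biject with independent sets, preserving the variable count so that Theorem~\ref{thm:bishard} applies directly. The only difference is that you spell out the bijection explicitly via the dual semantics on the two sides of the bipartition, whereas the paper states the equality of counts without elaboration.
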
 
\begin{proof}
  Given a bipartite graph $G=(V \dotcup U, E)$ with constant degree we construct 
  a 2-CNF $F$ by adding a clause $(v \rightarrow u)$ for every edge $\{v,u\} \in 
  E$. Now the number of independent sets in $G$ equals the number of satisfying 
  assignments of $F$. Furthermore the existence of an algorithm that solves 
  $\isat$ in time $2^{o(n)}$ would imply the existence of an algorithm that 
  solves $\#\mathrm{BIS}$ in time $2^{o(n)}$. Applying Theorem~\ref{thm:bishard} 
  we obtain that such an algorithm would refute \cETH.
\end{proof}

We sketch how to obtain the \#ETH dichotomy theorem for Boolean CSPs.

\begin{proof}[of Corollary~\ref{thm:candh}]
  If every relation in $\Gamma$ is affine then we can solve 
  $\#\mathrm{CSP}(\Gamma)$ in polynomial time using Gaussian elimination as 
  in~\cite{creignou1996complexity}.
  Otherwise, the problem is \#P-hard by~\cite{creignou1996complexity}.
  If, in addition, \cETH~holds, $\#\mathrm{CSP}(\Gamma)$ cannot be solved in 
  time $2^{o(n)}$ as a subexponential algorithm could also be used to solve 
  $\psat$ or $\isat$ (see Observation~\ref{obs:candh}) in time $2^{o(n)}$ which 
  is not possible assuming \cETH{} (by Theorem~\ref{thm:curt} and 
  Lemma~\ref{lem:imphard}).
\end{proof}

\end{document}